\tikzset{>={Latex[width=2mm,length=2mm]}}
\newtheorem{theorem}	 			{Theorem}[section]
\newtheorem{lemma}		[theorem]	{Lemma}
\newtheorem{definition}	 			{Definition}[section]
\newtheorem{remark}		[theorem]
{Remark}}
\newtheorem{proposition}		[theorem]
{Proposition}}
\theoremstyle{break}
{\theorembodyfont{\rmfamily} }
\newenvironment{proof}{\noindent {\em {Proof:}}}{$\blacksquare$\vskip
\belowdisplayskip}
 \gdef\tfn@fnt{0}% 
\newlist{exlist}{enumerate}{1}
\setlist[exlist]{label=(\alph*)}
\newcommand{\prob}[2][]{\text{\bf Pr}\ifthenelse{\not\equal{}{#1}}{_{#1}}{}\!\left[#2\right]}
\newcommand{\expect}[2][]{\text{\bf E}\ifthenelse{\not\equal{}{#1}}{_{#1}}{}\!\left[#2\right]}
\newcommand{\var}[2][]{\text{Var}\ifthenelse{\not\equal{}{#1}}{_{#1}}{}\!\left[#2\right]}
\newcommand{\dev}[2][]{\text{StdDev}\ifthenelse{\not\equal{}{#1}}{_{#1}}{}\!\left[#2\right]}
\def\sm{\setminus}
\def\sse{\subseteq}
\def\RR{\mathbb{R}}
\def\I{\mathcal{I}}
\def\var{\mbox{Var}}
\def\σ{\mathbf{\sigma}}
\def\β{\mathbf{\beta}}
\newcommand{\MaxCut }{{\sc MaxCut }}
\title{Stability and Recovery for Independence Systems}
\author{Vaggos Chatziafratis\thanks{Computer Science Department, Stanford University}\\{\tt vaggos@stanford.edu}\and Tim Roughgarden\footnotemark[1]\\{\tt tim@cs.stanford.edu}\and Jan Vondrak\thanks{Department of Mathematics, Stanford University}\\{\tt jvondrak@stanford.edu}}
\date{\today}
\begin{document}

\maketitle

\begin{abstract}
Two genres of
  heuristics that are frequently reported to perform much better on
  ``real-world'' instances than in the worst case are {\em greedy
    algorithms} and {\em local search algorithms}.  In this paper, we
  systematically study these two types of algorithms for the problem
  of maximizing a monotone submodular set function subject to
  downward-closed feasibility constraints.  We consider {\em
    perturbation-stable} instances, in the sense of Bilu and
  Linial~\cite{bilu2012stable}, and precisely identify the stability
  threshold beyond which these algorithms are guaranteed to recover
  the optimal solution.  Byproducts of our work include the first
  definition of perturbation-stability for non-additive objective
  functions, and a resolution of the worst-case approximation
  guarantee of local search in $p$-extendible systems.

\end{abstract}
\clearpage
\section{Introduction}\label{sec:introduction}

Designing polynomial-time approximation algorithms with worst-case
guarantees is one of the most common approaches to coping with
$NP$-hard optimization problems.  For many problems, even the
best-achievable worst-case guarantee (assuming $P \neq NP$) is too
weak to be immediately meaningful.
%\footnote{For an exercise in
 % futility, try to convince a practitioner to use an
  %algorithm on the grounds that it is a (worst-case) 2-approximation.}
Fortunately, it has been widely observed that most approximation
algorithms typically compute solutions that are much better than their
worst-case approximation guarantee would suggest
(e.g.~\cite{cormode2010set,rego2011traveling}).
%citation 1: http://dimacs.rutgers.edu/~graham/pubs/papers/ckw.pdf
%citation 2: http://www.sciencedirect.com/science/article/pii/S0377221710006065
Is there a mathematical explanation for this phenomenon?

% When faced with a computationally hard problem, we aim to design
% algorithms with good approximation guarantees in the \textit{worst
%   case}.  However, many real life instances have additional structure
% and many algorithms behave far better than what is theoretically
% predicted by their worst case analysis (probably the most famous such
% example is Dantzig's simplex algorithm for linear
% programming~\cite{spielman2001smoothed}). The
% \textit{beyond-worst-case} analysis of algorithms avoids the
% pessimistic viewpoint of a worst case analysis and explains
% theoretically the success of good (practical) algorithms on many
% instances encountered in real life data.

One line of work addresses this question by restricting attention to
instances that satisfy a {\em stability} condition, stating that there
should be a ``sufficiently prominent'' optimal solution.  
Such conditions are analogs of the ``large margin'' assumptions that are
often made in machine learning theory.
Such assumptions reflect the belief that the instances arising in
practice are ones that have a ``meaningful solution''.  For example,
if we run a clustering algorithm on a data set, it's because we're
expecting that a ``meaningful clustering'' exists.  
The hope is that
formalizing the assumption of a ``meaningful solution'' imposes
additional structure on an instance that provably makes the problem
easier than on worst-case instances.

Several such stability notions have been studied.
In this work, we focus on the most well-studied one, that of {\em
  perturbation-stability}
%Towards this direction, 
introduced by Bilu and Linial~\cite{bilu2012stable}.
% in the context of
%the Maximum Cut problem.
The idea behind the definition is that the optimal solution should be
robust to small changes in the input (e.g., the edge weights of a
graph).  For if this is not true, then a minor misspecification of the
data (which is often noisy in practice, anyways) can change the output of
the algorithm.  
In data analysis, one is certainly
hoping that the conclusions reached are not sensitive to small errors
in the data.
%
%introduced the notion of a \textit{stable instance} for graph
%optimization problems and argued that in many practical situations
%only sufficiently stable instances are of interest.  In particular,
%they argue that interesting instances should have stable solutions:
%the optimal solution should not change if we slightly perturb the
%input. For example, a clustering instance that is meaningful should
%have a solution that stands out; this solution should remain optimal,
%even if the edge weights are slightly inaccurate or noisy. 
An informal definition of $\gamma$-perturbation-stability (henceforth
simply {\em $\gamma$-stability}) is the following:
\begin{definition} [$\gamma$-stability] Given a weighted graph and an
  optimal solution $S^*$ for some problem, we say that the instance is
  {\em $\gamma$-stable} if $S^*$ remains the unique optimal solution, even
  when each edge weight is increased by an (edge-dependent) factor
  between~1 and~$\gamma$.
%any subset of the edge weights are increased by a factor of at most $\gamma$.
\end{definition}
Thus $1$-stability is equivalent to the assumption that the
optimal solution is unique.  The bigger the~$\gamma$, the stronger the
assumption (since there are fewer instances we are required to solve), and hence, the easier the problem.
%just asks for the optimum solution to be unique and as we increase this %parameter $\gamma$, the constraint on $S^*$ becomes more and more difficult to satisfy. 
The basic question is then \textbf{whether sufficiently stable
  instances of computationally hard problems are easier to
  solve}. 
%In particular, whether there exist algorithms that solve
%correctly and in polynomial time all sufficiently stable instances of
%some $NP$-hard problems or whether the approximation guarantee of an
%approximation algorithm improves when the input is stable. 
The ultimate goal is to determine the {\em stability threshold} of a
problem:
%Once we find that some $NP$-hard problems become easier to solve in
%%sufficiently stable instances, the second question is \textbf{what
%%is the 
the smallest value of $\gamma$ such that the problem is
  polynomial-time solvable on $\gamma$-stable instances.
%where the problem 
%threshold of stability needed to solve them exactly in polynomial time
%}. 
We note that there is no general connection between hardness of
approximation thresholds and stability thresholds of a
problem---depending on  the
problem, each could be larger than the other (e.g.,~\cite{balcan2015k}, where even though asymmetric $k$-center cannot be approximated to any
constant factor, it can be solved optimally under 2-stability).
%Two interesting and counterintuitive facts we would like to point out between %approximation and stability are the following:
%\begin{itemize}
%\item There exist problems that are hard to approximate within any constant %factor and yet we can exactly recover optimum even in 2-stable instances~\cite{balcan2015k}.
%
%\item An algorithm $A$ might have worse approximation guarantees than another %algorithm $B$ and yet require a smaller stability threshold for exact recovery~\cite{kleinberg2002approximation,makarychevSODA14,cualinescu1998improved}.
%\end{itemize}
Thus a good approximation algorithm need not recover an optimal
solution in stable instances, and conversely.\footnote{%
%It is easy to
%  see that there is no ``black-box'' connection between the two types
%  of thresholds.  
For a silly example, consider an algorithm that
  checks if an instance is stable (by brute-force), if so returns the
  optimal solution (computed by brute force), and if not returns a
  terrible solution.  Similarly, consider an $\alpha$-approximation
  algorithm that uses brute force to always output a suboptimal
  solution, in every instance where one   within $\alpha$ of optimal
  exists.  For more natural (and polynomial-time) examples,
  see~\cite{balcan2015k,makarychevSODA14}.}

%We prove that given a $p$-extendible independence set system $(X,\mathcal{I})$, where $X$ is our universe consisting of elements $e_i$, each having weight $w_i$ and $\I$ is the set of feasible (called independent) solutions, the standard greedy algorithm can recover exactly the optimum solution for the maximization problem $\max_{S\in \mathcal{I}}\sum_{e\in S} w_e$ if the system is $p$-stable. 

\subsection{Our Results}
%results and structure of the paper}

Two genres of algorithms that are frequently reported to perform much
better on ``real-world'' instances than in the worst case are {\em
  greedy algorithms} and {\em local search algorithms}.  The goal of
this paper is to systematically study these two types of algorithms
through the lens of perturbation-stability.  We carry this out for the
rich and well-motivated class of problems that concern maximizing a
monotone submodular set function subject to downward-closed
feasibility constraints (as in e.g.~\cite{lovasz1983submodular,
  nemhauser1978analysis, fisher1978analysis}).  Both greedy and local
search algorithms can be naturally defined for all problems in this
class.  Special cases include~\cite{mestre2006greedy} $k$-dimensional matching, asymmetric
traveling salesman, influence maximization~\cite{kempe2003maximizing},
welfare maximization in combinatorial auctions (with submodular
valuations)~\cite{lehmann2001combinatorial,vondrak2008optimal}, and so
on.

We organize our results along two different axes: whether the
objective function is additive or submodular, and according to the
``complexity'' of the feasibility constraints.  
For the latter, we use the classic notions of the intersection of
$p$ matroids (for a parameter $p$), the more general notion of
$p$-extendible systems (where a feasible solution can accommodate a
new element after deleting at most $p$ old ones), and the still
more general notion of $p$-systems (where the cardinality of maximal
independent sets can only differ by a $p$ factor).
\hyperref[table-results]{Figure \ref{table-results}} summarizes our main results.  
We also prove that all of our results are tight.

% Two natural questions that motivate our work on stability are:
% \textit{``how does Greedy perform on stable instances?"} and
% \textit{``are all local optima also global optima on stable
%   instances?"}. We prove positive stability and recovery results which
% mostly replicate the known approximation factors of Greedy and Local
% Search, however this analogy is not perfect. A summary of our results
% (old and {\color{red}new}) is presented in
% \hyperref[table-results]{Figure \ref{table-results}}.
\begin{figure}[h!]
	\centering
	\includegraphics[width=14cm,height=6cm]{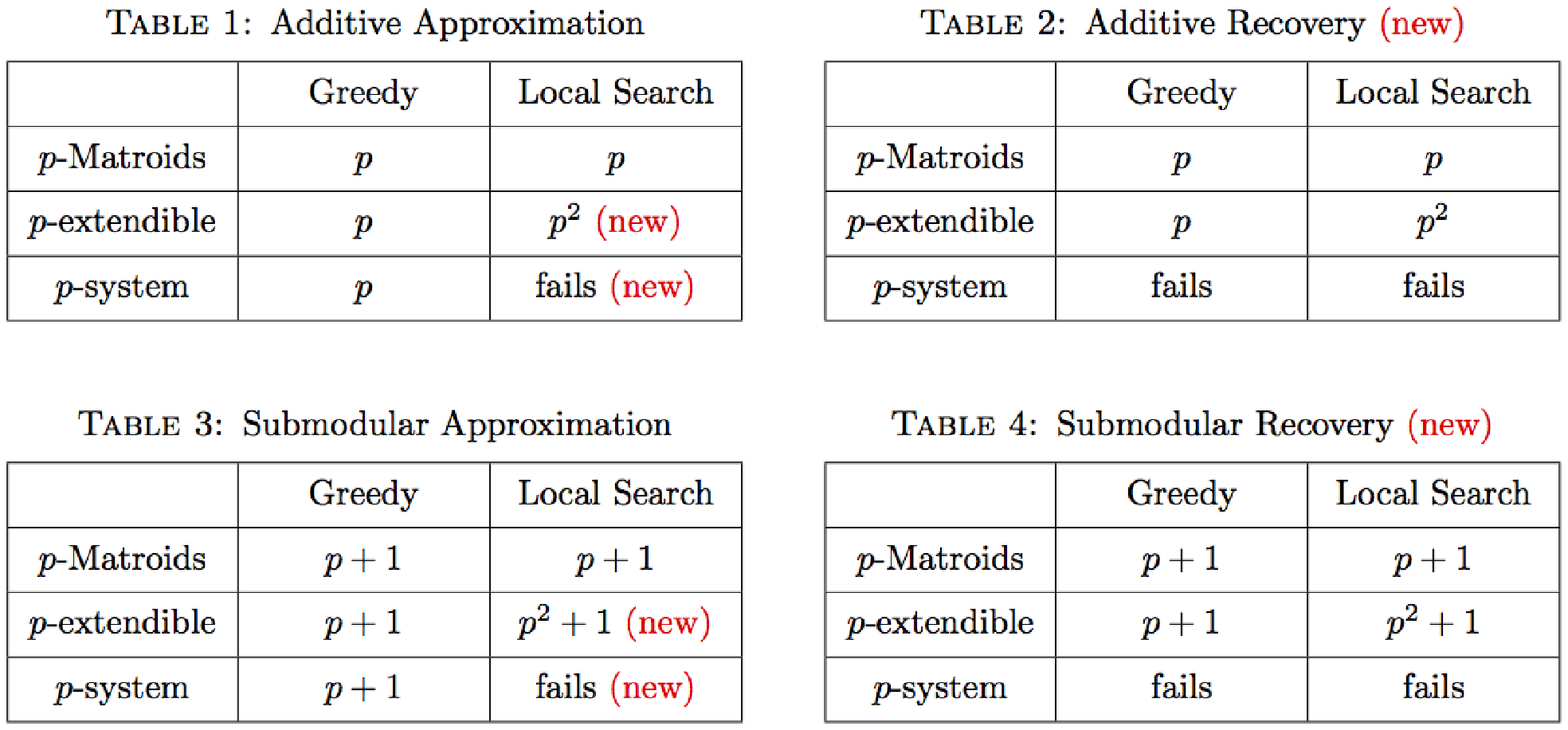}
        \caption{Summary of old and new results. 
On the left we have
          previous approximation results about greedy and local
          search          algorithms~\cite{korte1978analysis,nemhauser1978analysis,fisher1978analysis,
            reichel2007evolutionary} and our new local search
          approximation guarantees. (Each table entry indicates the
          worst-case approximation factor.)  On the right are our recovery
          results for greedy and local search algorithms, with each
          table entry indicating the smallest $\gamma$ such that the
          algorithm is optimal in every $\gamma$-stable instance.
All of the results are tight.}
	\label{table-results}
\end{figure}
%Observe the
%          differences between approximation and stability for the
%          different constraint systems. For all our results, there are
%          simple tight counterexamples.}

\hyperref[sec:additive]{Section \ref{sec:additive}} proves our results
for the greedy algorithm in the case of additive objective functions.
An interesting finding here is that for the most general set systems that
we consider ($p$-systems), the greedy algorithm can have an
infinite stability threshold, even though it is a good worst-case
approximation algorithm. In fact, this crucial difference between approximation and stability also led us to give a different characterization of the $p$-extendible systems. Another interesting differentiation between stability and approximation 
shows up in the case of a uniform matroid (cardinality constraints).

\hyperref[sec:submodular]{Section \ref{sec:submodular}} considers the greedy algorithm for
maximizing a monotone submodular function.  As all previous works on
perturbation-stability have considered only problems with additive
objective functions, here we need to formulate a notion of
perturbation-stability for submodular functions, which boils down to
defining the class of allowable perturbations of a submodular
function~$f$.  The ``sweet spot''---neither too restrictive nor too
permissive---turns out to be the set of perturbed
functions~$\tilde{f}$ such that: (i) $\tilde{f}$ is also monotone and
submodular; (ii) $\tilde{f}$ is a pointwise approximation of $f$
($\tilde{f}(S) \in [f(S),\gamma \cdot f(S)]$ for every $S$); and (iii)
the marginal value of an element~$j$ with respect to a set $S$ can
only go up (in $\tilde{f}$), and by at most $(\gamma-1)$ times the
stand-alone value of~$j$.\footnote{Each additional constraint on
allowable perturbations $\tilde{f}$ weakens the stability assumption, resulting in a harder
  problem.  For example, if one only assumes~(i) and~(ii) and
  not~(iii), then the problem becomes ``too easy'', and every
  $\alpha$-approximation algorithm automatically recovers the optimal
  solution in $\alpha$-stable instances.  If~(iii) is replaced by the
  stronger condition that all marginal values change by a factor in
  $[1,\gamma]$, the problem becomes ``too hard'', with no positive
  recovery results possible (essentially because zero marginal values
  in $f$ must stay zero in $\tilde{f}$).}  
This definition specializes to the usual one in
the special case of additive functions.  
%In data analysis applications
%where a submodular function is estimated from data
%(e.g.~\cite{backstrom2006group}), one would indeed hope to obtain
%results that are robust to perturbations of this type. 
Towards the end of this section, we also present an application for the welfare maximization problem, for which a weaker stability assumption is sufficient to guarantee recovery of the optimal allocation.

\hyperref[sec:local search]{Section~\ref{sec:local search}} identifies the smallest $\gamma$ such
that all local optima of $\gamma$-stable instances are also global
optima, with both additive and submodular functions.  A byproduct of
our results here is new tight worst-case approximation guarantees for
local search in $p$-extendible systems, which surprisingly were not
known previously.  The tight approximation guarantees are $p^2$ for
additive functions and $p^2+1$ for monotone submodular functions.

\subsection{Further Related Work}

Perturbation-stability was defined by
Bilu and Linial~\cite{bilu2012stable} 
in the context of the \MaxCut\
problem.  Subsequent work on perturbation-stability
includes~\cite{bilu2012practically,makarychevSODA14,angelidakisSTOC17,balcan2012clustering,balcan2015k,awasthi2012center,reyzin2012data,mihalak2011complexity,balcan2010nash}.
Independently of
%We note that prior to the work of 
Bilu and Linial \cite{bilu2012stable}, Balcan, Blum and
Gupta~\cite{balcan2009approximate} introduced the related notion of
%and studied a somewhat similar but different notion of
\textit{approximation stability} in the context of clustering problems
like $k$-means and $k$-median.  
More technically distant analogs of these stability conditions (but
with similar motivation) were proposed by~\cite{ackerman2009clusterability,daniely2012clustering,ostrovsky2006effectiveness}; see
Ben-David~\cite{ben2015computational} for further discussion.

\section{Preliminaries}\label{sec:preliminaries}
In this section we describe the notation and definitions which we use through the rest of the paper. We start by defining the family of $p$-systems and the problem of submodular maximization; then we present our two protagonist algorithms and the standard (additive) stability definition.

\begin{itemize}
\item
$p$-Systems~\cite{korte1978analysis,lee2009submodular}: Suppose we are given a (finite) ground set $X$ of $m$ elements (this could be the set of edges in a graph) and we are also given an \textit{independence family} $\I\sse 2^X$, a family of subsets that is downward closed; that is, $A \in \I$ and $B\sse A$ imply that $B\in \I$. A set $A$ is independent iff $A\in \I$. For a set $Y\sse X$, a set $J$ is called a \textit{base} of $Y$, if $J$  is a maximal independent subset of $Y$; in other words $J\in \I$ and for each $e\in Y\sm J$, $J+e\not\in \I$. Note that $Y$ may have multiple bases and that a base of $Y$ may not be a base of a superset of $Y$. $(X,\I)$ is said to be a $p$-system if for each $Y\sse X$ the following holds:
\[ \dfrac{\max_{J: J\ \text{is a base of}\ Y} |J|}{\min_{J: J\ \text{is a base of}\ Y}|J|} \le p\]
 All set systems are assumed to be down-closed. There are some interesting special cases of $p$-systems~\cite{mestre2006greedy, calinescu2011maximizing}: 
\[
\text{intersection of $p$ matroids $\sse$ $p$-circuit-bounded systems $\sse$ $p$-extendible systems $\sse$ $p$-systems} 
\]
\item $p$-extendible: An independence system $(X,\I)$ is $p$-extendible if the following holds: suppose we have $A\sse B, A,B \in \I$ and $A+e\in \I$; then there should exist a set $Z\sse B\sm A$ such that $|Z|\le p$ and $B\sm Z +e \in \I$. We note here that $p$-extendible systems make sense only for integer values of $p$, whereas $p$-systems can have $p$ being fractional and that 1-systems as well as 1-extendible systems are exactly matroids. It is a family of independence systems containing many important and seemingly unrelated problems like welfare maximization, $k$-dimensional Matching, Asymmetric Travelling Salesman Problem, weighted $\Delta$-Independent Set ($\Delta$: maximum degree) and others~\cite{mestre2006greedy}.

\item
Submodular Maximization: A set function $f : 2^X\rightarrow \RR^+\cup \{0\}$ is submodular if for every $A, B \subseteq X$, we have $f(A) + f(B) \ge f(A \cup B) +
f(A\cap B)$. Given a $p$-system $(X,\I)$ and a monotone submodular function $f$, we are interested in the problem of maximizing $f(S)$ over the
independent sets $S \in \I$; in other words we wish to find $\max_{S\in \I} f(S)$. If $f$ is additive, we can associate a weight $w_e$ with each element $e \in X$ and we want to find $\max_{S\in \I}w(S)$, where $w(S)=\sum_{e\in S}w_e$.

\item
Greedy algorithm: It starts with $S=\emptyset$ and greedily picks elements of $X$ that will increase its objective value by the most, while remaining feasible i.e. picks $e^*=\arg\max_{e\in X, S+e\in \I}(f(S+e)-f(S))$. It is a well-known fact~\cite{korte1978analysis,nemhauser1978analysis}, that for any $p$-system, the standard greedy algorithm is a ($p+1$)-approximation (if $f$ is additive, Greedy is a $p$-approximation).

\item
$(p,q)$-Local Search: It starts from a feasible solution and at each iteration seeks for an improving move. In particular, starting from any $S\in \I$, it tries to find a \textit{better} $S'\in\I$ with: $|S\sm S'|\le p$, $|S'\sm S|\le q$ and $f(S')>f(S)$. If it finds such a feasible solution $S'$, it switches to $S'$ and repeats. It stops when no improving move can be made. Note that the stopping condition and its performance depend on the size of the $(p,q)$-\textit{neighbourhood} used. We note that $(p,1)$-local search is necessary for $p$-extendible systems. For recent improvements on Local Search performance in the case of matroids, we refer the reader to \cite{lee2009submodular}.

\item
Stable instances: Stability can be defined in general for instances of weighted optimization problems~\cite{bilu2012stable}, where the objective function $w$ is additive. In our case, given a $p$-system and an \textit{additive} function $w$ we wish to maximize over the $p$-system, we call the instance $\gamma$-stable, if the optimal solution $S^* \in \mathcal{I}$ remains the unique optimum, even after assigning a new weight $\tilde{w}_e$ to an element $e$ such that $w_e\le \tilde{w}_e\le \gamma \cdot w_e$. In an extreme case, we can keep the weights of the elements in optimum the same and increase all others by a factor of $\gamma$; the optimum should remain the same. Sometimes, we say that we $\gamma$-perturb the input when we multiply some weights by at most $\gamma$. We will see in \hyperref[sec:submodular] {Section \ref{sec:submodular}} how to extend this additive stability definition to stability for submodular functions.

\end{itemize}
\section{Warm-up: Additive Case and Greedy Recovery}\label{sec:additive}
In this section, as a warm-up, we deal with additive functions, proving the first positive recovery result for the greedy algorithm and showing that it is tight.
\subsection{Exact Recovery for $p$-extendible, $p$-stable systems}

We are given an independence set system $(X,\mathcal{I},w)$ and we want to find an independent solution $S^* \in \mathcal{I}$ with maximum weight, where for $I\in\I: w(I)=\sum_{e \in I}w(e)$. We are interested in the performance of the standard greedy algorithm and we can prove the following:

\begin{theorem}\label{greedy-additive}
Given an instance of a $p$-extendible independence system $(X,\mathcal{I},w)$, that has a $p$-stable optimal solution $S^* = \arg \max_{I\in \I}w(I)$, the Greedy algorithm exactly recovers $S^*$.
\end{theorem}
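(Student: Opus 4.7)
The plan is to argue by contradiction, using the standard ``first-mistake'' analysis for greedy combined with an exchange argument. Suppose greedy fails to recover $S^*$. Let $g_1, g_2, \ldots$ be greedy's selections in order, and let $i$ be the first index with $g_i \notin S^*$; write $A = \{g_1,\ldots,g_{i-1}\} \subseteq S^*$ and $e = g_i$. By choice of greedy, $A + e \in \mathcal{I}$ and $e$ maximizes $w(\cdot)$ over all elements whose addition keeps $A$ independent.

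Next I would invoke $p$-extendibility to build a single-element swap bounded by $p$. Since $A \subseteq S^*$, $A, S^* \in \mathcal{I}$, and $A + e \in \mathcal{I}$, the defining property of $p$-extendible systems provides a set $Z \subseteq S^* \setminus A$ with $|Z| \le p$ and $(S^* \setminus Z) + e \in \mathcal{I}$. Each $z \in Z$ satisfies $A + z \subseteq S^* \in \mathcal{I}$, so $z$ was a feasible candidate for greedy at step $i$; consequently $w(e) \ge w(z)$ for every $z \in Z$, and summing yields
\[
p \cdot w(e) \;\ge\; |Z| \cdot w(e) \;\ge\; w(Z).
\]

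The final step is to turn this inequality into a violation of $p$-stability. I would define a perturbation $\tilde{w}$ that multiplies $w(e)$ by exactly $p$ and leaves every other weight unchanged; this is a valid $p$-perturbation since each scaling factor lies in $[1,p]$. Setting $S' = (S^* \setminus Z) + e$, which is independent and distinct from $S^*$ (because $e \notin S^*$), we obtain
\[
\tilde{w}(S') \;=\; w(S^*) - w(Z) + p \cdot w(e) \;\ge\; w(S^*) \;=\; \tilde{w}(S^*),
\]
so $S^*$ is not the unique optimum under $\tilde{w}$, contradicting $p$-stability.

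I expect the main subtlety to be the interaction between tie-breaking in greedy and the \emph{strict} uniqueness required by stability: because $|Z|$ can equal $p$ and we only have weak inequalities $w(e) \ge w(z)$, the perturbation must be taken at the extreme end of the allowed range (factor exactly $p$) in order for the weak inequality $p \cdot w(e) \ge w(Z)$ to suffice. This also explains why the stability threshold $\gamma = p$ is the right value and points to why the bound is tight: a worst-case $p$-extendible instance with $|Z| = p$ and equal weights inside $Z \cup \{e\}$ saturates the argument.
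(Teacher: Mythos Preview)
Your proposal is correct and follows essentially the same argument as the paper: a first-mistake analysis, an application of $p$-extendibility with $A\subseteq S^*$ to obtain $Z$ with $|Z|\le p$ and $(S^*\setminus Z)+e\in\mathcal{I}$, the greedy inequality $p\cdot w(e)\ge w(Z)$, and the single-element $p$-perturbation of $w(e)$ to contradict $p$-stability. The only minor point left implicit is that ``greedy fails'' forces some $g_i\notin S^*$ (since by maximality greedy cannot output a strict subset of $S^*$), which the paper handles explicitly at the end; otherwise the proofs coincide.
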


\begin{proof}
From the definition of $p$-extendibility we know that for the system $\mathcal{I}$, the following holds: suppose $A\subseteq B, A,B \in \mathcal{I}$ and $A+e \in \mathcal{I}$, then there is a set $Z\subseteq B\setminus A$ such that $|Z|\le p$ and $B\setminus Z+e \in \mathcal{I}$. The Greedy starts from the empty set and greedily picks elements with maximum weight subject to being feasible; it finally outputs $S$ which is a maximal solution, i.e. $S\cup \{e\}\notin \I, \forall e\in X\sm S$. In order to get exact recovery, we want to show that $S\equiv S^*$. 

Let's suppose $S\sm S^*\neq \emptyset$. Then, out of all the elements of $S\sm S^*$ that the Greedy selected, let's focus on the first element $e_1\in S\sm S^*$. Let $S_{\{e_1\}}$ denote the greedy solution right before it picked element $e_1$. Note that before choosing $e_1$, greedy $S_{\{e_1\}}$ was in agreement with the optimal solution, i.e. $S_{\{e_1\}}\subseteq S^*$. Since $e_1 \not \in S^*$, we can use the $p$-extendibility, where we specify $A=S_{\{e_1\}}, B=S^*, e=e_1$ ($A+e \equiv S_{\{e_1\}} + e_1 \in \mathcal{I}$, since Greedy is always feasible) and we get, following the above definition, that there exists set of elements $Z\subseteq S^*\sm A \equiv S^*\sm S_{\{e_1\}}$, with $|Z|\le p$ and $(S^*\setminus Z)\cup\{e_1\} \in \mathcal{I}$. This intuitively means that the element $e_1$ has conflicts with the elements in $Z\subseteq S^* \setminus S_{\{e_1\}}$, but if we remove at most $|Z|\le p$ elements from $S^*\sm S_{\{e_1\}}$, we get no conflicts and thus an independent (feasible) solution according to the system $\mathcal{I}$. 

We call this solution $J$, i.e. $J=(S^*\setminus Z)\cup\{e_1\} \in \mathcal{I}$ (note $J\neq S^*$) and we will show that we can perturb the instance (new weight function $\tilde{w}$) no more than a factor of $p$, so that $J$'s weight is at least that of the optimal, i.e. $\tilde{w}(J)\ge \tilde{w}(S^*)$, which would be a contradiction to the $p$-stability of the given instance. All we have to do is perturb the instance by multiplying the weight of the element $e_1$ by $p$. By the greedy criterion for picking elements (note that all elements of $Z$ were available to Greedy at the point it chose $e_1$) and the fact that $|Z|\le p$ we get:
\begin{align} \label{eqeq1}
\forall e \in Z\subseteq \left(S^*\sm S_{\{e_1\}}\right): w(e_1) \ge w(e) \implies p\cdot w(e_1)\ge \sum_{e\in Z}w(e) = w(Z)
\end{align} which implies that the weight of the set $J$ is actually no less than the weight of $S^*$ in the aforementioned perturbed instance (weight function $\tilde{w}$). Indeed:
\[
\tilde{w}(J)=\tilde{w}(\left(S^*\setminus Z\right) \cup e_1) = \tilde{w}(S^*\setminus Z) + \tilde{w}(e_1) = w(S^*)-w(Z)+p\cdot w(e_1)\ge w(S^*)=\tilde{w}(S^*)
\]
where for the last inequality we used (\ref{eqeq1}). This is a contradiction because it violates the $p$-stability property (the optimal solution should stand out as the unique optimum for any $p$-perturbation) and thus we conclude that $S\sm S^*=\emptyset$. Since Greedy outputs a maximal solution, we conclude that $S$ coincides with $S^*$ and so Greedy exactly recovers the optimal solution.
\end{proof}

% Although the proof is fairly straightforward, it is the first time,
% to the best of our knowledge, that recovery results for many
% different problems are proven simultaneously, indicating that
% independence systems was the right notion of abstraction. As we show
% next, 
We next show that our result is tight both in terms of the stability factor and the generality of $p$-extendible systems. 

\begin{proposition}
There exist $p$-extendible systems with a $(p-\epsilon)$-stable optimal solution $S^*$, for which the Greedy fails to recover it.
\end{proposition}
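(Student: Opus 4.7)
The plan is to exhibit a small instance that saturates the inequality $p \cdot w(e_1) \ge w(Z)$ at the heart of the proof of \hyperref[greedy-additive]{Theorem \ref{greedy-additive}}: a slightly heavy element whose selection by Greedy blocks exactly $p$ unit-weight optimal elements.

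Concretely, I would use a $p$-dimensional matching instance (a canonical $p$-extendible system) on the $(p+1)$-element ground set $\{e_1, f_1, \ldots, f_p\}$. Take $e_1 = (a, a, \ldots, a)$, and let $f_i$ be a $p$-tuple whose $i$-th coordinate is $a$ and whose remaining coordinates are labels chosen to be distinct across the $f_j$'s. Then $\{f_1, \ldots, f_p\}$ is independent (no two $f_i$'s share a coordinate), while $\{e_1, f_i\}$ is infeasible for every $i$. Assign weights $w(e_1) = 1 + \tau$ and $w(f_i) = 1$ for all $i$, with a small parameter $\tau > 0$ to be chosen.

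The verification splits into three short steps. First, since $w(e_1)$ is the strict maximum weight, Greedy picks $e_1$ first and then cannot extend, outputting $\{e_1\}$ of weight $1 + \tau$; meanwhile, $S^* = \{f_1, \ldots, f_p\}$ has weight $p$, so for any $\tau < p - 1$ the true optimum is $S^*$ and Greedy fails to recover it. Second, every feasible set is either a subset of $\{f_1, \ldots, f_p\}$ or the singleton $\{e_1\}$, and any proper subset of $S^*$ is strictly beaten by $S^*$ under every perturbation since the omitted $f_i$'s have strictly positive weight; thus the only nontrivial competitor to $S^*$ is $\{e_1\}$. Third, against $S^*$ the worst $(p - \epsilon)$-perturbation multiplies $w(e_1)$ by $p - \epsilon$ and leaves the $f_i$'s unchanged, giving $\tilde{w}(\{e_1\}) = (p - \epsilon)(1 + \tau)$ versus $\tilde{w}(S^*) = p$; picking any $\tau$ with $0 < \tau < \epsilon / (p - \epsilon)$ makes $\tilde{w}(S^*) > \tilde{w}(\{e_1\})$ for every admissible perturbation, so $S^*$ is the strict unique optimum and the instance is $(p - \epsilon)$-stable.

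The only routine remaining check is that $p$-dimensional matching is $p$-extendible, which is immediate: a new element conflicts with at most $p$ elements of any feasible set, one per coordinate. I do not expect a real obstacle here, since the construction is essentially the unique tight example dictated by the equality case of the Greedy exchange inequality used in the proof of \hyperref[greedy-additive]{Theorem \ref{greedy-additive}}; the only thing to calibrate is $\tau$, which must be small relative to $\epsilon/(p-\epsilon)$ to simultaneously ensure that Greedy misranks $e_1$ above the $f_i$'s and that $S^*$ survives all $(p-\epsilon)$-perturbations.
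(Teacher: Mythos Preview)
Your construction is correct and is essentially the same example the paper uses: the paper spells out only the $p=2$ case (maximum weight matching on a path of length~3 with weights $(1,1+\epsilon',1)$, requiring $\epsilon'<\tfrac{\epsilon}{2-\epsilon}$) and then points to $p$-dimensional matching for general~$p$, which is exactly what you have written out in full. Your parameter choice $\tau<\epsilon/(p-\epsilon)$ matches the paper's bound, and the stability verification (reducing to the single competitor $\{e_1\}$ and checking the extremal perturbation) is the right argument.
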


\begin{proof}
%We can give a simple counterexample for a problem which is even in
%$P$. 
Take a Maximum Weight Matching instance (here $p=2$): a path of length
3 with weights (1,$1+\epsilon'$,1). The Greedy fails to recover the
optimal solution $S^*$, since it picks the $(1+\epsilon') $ edge
whereas it should have picked both the other edges. For the right
choice of $\epsilon'$ ($\epsilon'<\tfrac{\epsilon}{2-\epsilon}$), we
can make the instance arbitrarily close to $(p-\epsilon)=(2-\epsilon)$
stable.  Observe that we can give such examples for any value of $p$
(consider the $p$-dimensional Matching problem) and that the example
can be made arbitrarily large just by repeating it.  
\end{proof}

\begin{proposition} \label{knapsack}
There are $p$-systems whose optimal solution $S^*$ is $M$-stable (for arbitrary $M>1$) and for which the greedy algorithm fails to recover it.
\end{proposition}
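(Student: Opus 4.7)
The plan is, for every $M>1$, to exhibit a concrete $p$-system (with $p=p(M)$) equipped with an additive weight function, such that the instance is $M$-stable yet the greedy algorithm returns a suboptimal set. Since Theorem~\ref{greedy-additive} gave exact recovery for $p$-extendible systems already at stability $p$, the construction should crucially exploit a feature that $p$-systems allow but $p$-extendible systems forbid: a pair of maximal independent sets whose sizes differ by an arbitrarily large factor, so that there is no way to ``exchange'' a small set of elements to extend the shorter one.

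Concretely, I would take $X=\{e_0,e_1,\ldots,e_n\}$ and define $\I$ to consist of $\emptyset$, the singleton $\{e_0\}$, and every subset of $\{e_1,\ldots,e_n\}$. This is downward closed; verifying that $(X,\I)$ is an $n$-system is immediate, because for any $Y\sse X$ containing $e_0$ the bases of $Y$ are exactly $\{e_0\}$ and $Y\sm\{e_0\}$, whose size ratio is at most $n$, while for any $Y$ not containing $e_0$ the unique base is $Y$ itself. I would then assign weights $w(e_0)=1+\eps$ and $w(e_i)=1$ for $i\geq 1$. The greedy algorithm must pick $e_0$ first (it is the unique heaviest element), and afterwards no element is compatible with it, so greedy outputs $\{e_0\}$ of weight $1+\eps$, while $S^*=\{e_1,\ldots,e_n\}$ has weight $n$.

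The remaining step is to check $M$-stability under an arbitrary perturbation $\tilde w_e\in[w_e,Mw_e]$. Every proper subset of $S^*$ loses weight at least $1$ relative to $S^*$ even after perturbation and therefore cannot tie it; the only other maximal independent set is $\{e_0\}$, whose perturbed weight is at most $M(1+\eps)$, versus $\tilde w(S^*)\geq n$. Hence $S^*$ remains the unique optimum under every allowed perturbation as long as $n>M(1+\eps)$, which is ensured by fixing any small $\eps>0$ and choosing, for instance, $n=\lceil 2M\rceil+1$. There is essentially no obstacle here; the only point requiring care is to ensure that the independence system is a genuine $p$-system with unboundedly disparate bases rather than one of the more structured subclasses (for which Theorem~\ref{greedy-additive} would instead give recovery), and the trap-plus-rewards construction above is arguably the simplest such witness.
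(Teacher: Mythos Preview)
Your construction is correct and cleanly proves the proposition as literally stated: the system $\I=\{\emptyset\}\cup\{\{e_0\}\}\cup 2^{\{e_1,\ldots,e_n\}}$ is downward closed, has base ratio at most $n$, the greedy choice $\{e_0\}$ is maximal and suboptimal, and $M$-stability follows once $n>M(1+\eps)$.

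The paper takes a different and strictly stronger route. It builds a knapsack-type instance (capacity $M'+1$; one class of $M'$ elements of value $2$ and size $1$, one special element of value $1+\eps$ and size $1$, and $M'$ elements of value $1$ and size $1/M'$) which is a $p$-system with $p<2$ for every choice of $M'$, yet is $(M'-1)$-stable and fools greedy. The crucial point is that $p$ stays \emph{bounded} while the stability parameter grows without bound, so greedy remains a $2$-approximation throughout; this is exactly what the paper flags as ``surprising'' in the sentence following the proposition and is the content driving the subsequent discussion of hereditary systems. In your construction, by contrast, $p=n$ scales with $M$, so greedy is only an $n$-approximation on your instance and its failure on $M$-stable inputs with $M\approx n$ is not in tension with its worst-case guarantee. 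Your argument is shorter and more transparent; the paper's buys the separation between approximation ratio and stability threshold that the section is really after.
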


\begin{proof}
The example is based on a knapsack constraint. Fix $M'>1$ and let the size of the knapsack $B=M'+1$. We will have elements of type $A$ ($|A|=M'$), a special element $e^*$ and elements of type $C$ ($|C|=M'$). The pair (value, size) for elements in $A,C$ is respectively: $(2,1),(1,\tfrac{1}{M'})$ and for $e^*: (1+\epsilon,1), \epsilon>0$. Note that the optimal solution $S^*$ is $A\cup C$ with total value $2M'+M'=3M'$ and size $M'+M'\cdot\tfrac{1}{M'}=M'+1$ (fits in the knapsack). However, Greedy will pick $A\cup\{e^*\}$ for a total value of $2M'+1+\epsilon$ and size $M'+1$. Note that this is a $p$-system for a value of $p<2$ since any feasible solution $S$ can be extended to a solution $S'$ with $|S'|\ge M'+1$ and the largest feasible solution has $2M'$ elements (there are only $2M'+1$ elements in total). However, this is not a $2$-extendible system (it is actually an $M'$-extendible system) and we see that even if it is ($M'-1$)-stable, Greedy still fails to recover the optimal solution $S^*$. To see why it is ($M'-1$)-stable, note that the only $\gamma$-perturbation (perturbations are allowed only on the values, not the sizes) we can make to favour the greedy solution is to the element $e^*$, thus we would need $\gamma(1+\epsilon)\ge M' \implies \gamma>M'-1$ ($\epsilon$ is small). Choose $M'=M+1$ and this concludes the proof. We also note that a variation of this counterexample would trick as well the (more natural) Greedy that sorts the elements according to value density ($\tfrac{v_i}{s_i}$) instead of just their value.
\end{proof}

We find \hyperref[knapsack]{Proposition~\ref{knapsack}} surprising, given that the greedy
algorithm is a good worst-case approximation algorithm for such
problems. The above ``bad'' example leads us to the definition of \textit{hereditary} systems; it turns out that this is another characterization of the $p$-extendible systems. Due to space constraints, we defer the definition until \hyperref[sec:hereditary]{Appendix~\ref{sec:hereditary}}.
%Hence the story of recovery is somewhat different from the story of %approximation, where having a $p$-system (generalization of $p$-extendible systems) is enough in order to get a $p$-approximation greedily, whereas it is not enough to guarantee greedy recovery. 

%\subsection{Relaxing the notion of stability (local stability)}

%We introduce a more realistic notion of stability which we call \textit{local stability}. Intuitively, an element $e\in X$

%The optimal solution of every perturbed instance of a weakly
%stable instance, is close to the optimal solution of the original instance, but may not be exactly
%the same (see Section 6 for details). We believe that $\gamma$-weak stability may be a more realistic
%assumption than $\gamma$-stability in practice. Bilu and Linial [9] mentioned weakly stable instances in
%the introduction to their paper (without formally defining them), and proposed to study them in
%the future. Our algorithms for $\gamma$-weakly stable instances are not robust.

%%% Local Variables:
%%% mode: latex
%%% TeX-master: "main"
%%% End:

\section{The Case of Submodular Functions}\label{sec:submodular}

%Previously, we saw stability and recovery for additive set
%functions. In this section, we are interested in
This section considers recovery results for stable instances where the
objective function is monotone and submodular. Submodular functions
are widely used in many areas ranging from mathematics to economics,
and they model situations with \textit{diminishing
  returns}. Famous examples include influence
maximization~\cite{kempe2003maximizing,mossel2007submodularity,
  easley2010networks, chen2009efficient} and welfare maximization in
auctions and game theory~\cite{lehmann2001combinatorial,
  nisan2007computationally}. For example, in influence maximization,
the goal is to ``activate'' a subset of the participants in a social 
network (e.g., provide with information, or a promotional product) so as to
maximize the expected spread of the idea or product.
The diffusion of information is usually modeled with submodular
functions (indicating the probability that a node adopts a new idea or
product as a function of how many of her neighbors in the social
network have already done so).
%find a set of people in a social network to give a viral
%information to, with the goal of getting most people infected. There
%are many parameters in the model like the individuals' thresholds for
%transitioning from \textit{inactive} to \textit{infected} or the
%probability of infection between two people. All those parameters are
%not known in advance and we can only get estimates for them in real
%world applications. 
%The method of choice for approximating the influence maximization
%problem is greedy algorithms~\cite{kempe2003maximizing}.
In practice, the submodular functions in the input are estimated from
data and hence are noisy (e.g.~\cite{backstrom2006group}).  One hopes that the output of an influence
maximization algorithm (which is typically a greedy
algorithm~\cite{kempe2003maximizing}) is robust to modest errors in
the specification of the submodular function.  This section proposes a
definition to make this idea precise, and proves tight results for greedy
and local search algorithms under this stability notion.

% When the estimate is fed in our algorithm, we
% still would like to output reasonable answers close to optimum. The
% definition of submodular stability we give here tries to capture
% interesting perturbations we would like to be robust against, like
% small changes to the marginal gains or small changes to the values of
% the function on specific sets.

\subsection{Stability for submodular functions} \label{submod-def}

% We first need to define what perturbation and stability mean in the case when our function is submodular and not simply additive. This turns out not to be an easy task (see also \hyperref[sec:unsuccessful]{Appendix~\ref{sec:unsuccessful}}). In \cite{bilu2012stable}, the authors are interested in the \MaxCut\ problem and naturally, the proposed perturbation and stability definitions handle additive weight functions. Here we see how to extend these definitions to handle submodular functions, in a sensible way generalizing the additive case. 
All previous work on perturbation-stability considered only additive
objective functions.  We next state our extension to submodular functions.

\begin{definition} [$\gamma$-perturbation, $\gamma \ge 1$]\label{d:sm} \label{def:stability}

Given a monotone submodular function $f: 2^X\rightarrow \RR^+\cup \{0\}$, we define $f_S(j) = f(S+j) - f(S)$.
A $\gamma$-perturbation of $f$ is any function $\tilde{f}$ such that the following three properties hold:
\begin{enumerate}
\item $\tilde{f}$ is monotone and submodular.
\item $f \le \tilde{f} \le \gamma f$, or in other words $f(S) \le \tilde{f}(S)\le \gamma f(S)$ for all $S \subseteq X$.
\item For all $S \subseteq X$ and $j \in X \setminus S$, $0 \le \tilde{f}_S(j) - f_S(j) \le (\gamma -1)\cdot f(\{j\})$. \label{property}
\end{enumerate}
\end{definition}
%For example, with $\gamma=1$, 
%A sanity check for the special case of $\gamma=1$: due to 
%by properties 1 and 3, we cannot perturb the function at all and this
%coincides with the standard notion of perturbation from the additive
%case; that's the reason we need to use the quantity $(\gamma-1)$ on
%the right hand side and not just $\gamma$. Having the above definition
%of $\gamma$-perturbation of an instance, we can now define the notion
%of stability for the optimum solution:
The definition of a $\gamma$-stable instance is then defined as usual.

\begin{definition} [$\gamma$-stability]\label{d:sm2}
  Given an independence system $(X,\I)$ and a monotone submodular
  function $f: 2^X\rightarrow \RR^+\cup \{0\}$, let $S^*:=\arg\max_{S\in \I}f(S)$. The instance is
  {\em $\gamma$-stable} if
  for every $\gamma$-perturbation of the initial function $f$, $S^*$
  remains the unique optimal solution.
\end{definition}
As discussed in the \hyperref[sec:introduction]{Introduction}, 
while \hyperref[d:sm]{Definition~\ref{d:sm}} is perhaps not the first one that comes to
mind, it appears to be the ``sweet spot''.  Natural modifications\footnote{If we dropped Property 3, then \textit{any} $c$-approximation algorithm ($c\ge1$) returning a solution $S$ with $f(S)\ge\tfrac{1}{c}\cdot f(S^*)$, could be made to have value equal with $S^*$ in the $c$-perturbed version $\tilde{f}(S)=cf(S)\ge f(S^*)=\tilde{f}(S^*)$. If we dropped Property 2, then the definition would not be a generalization for the case of additive perturbations as we could have $\tilde{f}(S)> \gamma f(S)$ for some set $S$, because of the quantity $f(\{j\})$ in Property 3, which is relative to the \textit{empty set} and may be large compared to $f_S(j)$.} of
the definition are generally either too restrictive (rendering the
problem impossible, e.g.~if property~3 is replaced with relative
perturbations since then the zero marginal values in $f$ must stay zero in $\tilde{f}$) or too permissive (rendering the problem uninteresting,
with all $\alpha$-approximation algorithms equally good).
%and is neither
%too restrictive (  nor too
%permissive---turns out to be the set of perturbed

\begin{proposition}
  \hyperref[d:sm]{Definition~\ref{d:sm}} specializes to perturbation-stability in the
  special case of an additive objective function.
%generalizes the standard notion of
%  \textit{Bilu-Linial} stability for the additive case where we just
%  have a simple \textit{weight} function on the elements of the ground
%  set $X$.
\end{proposition}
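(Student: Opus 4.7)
The plan is to prove both directions of an equivalence: when $f$ is additive with weights $w$, the instance is $\gamma$-stable in the sense of \hyperref[d:sm2]{Definition~\ref{d:sm2}} (against the submodular perturbations of \hyperref[d:sm]{Definition~\ref{d:sm}}) if and only if it is $\gamma$-stable in the classical additive sense, i.e.\ $S^*$ is the unique maximizer of $\tilde w(\cdot)$ for every $\tilde w$ with $w_e \le \tilde w_e \le \gamma w_e$.

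For the easy direction, I would observe that every additive $\gamma$-perturbation $\tilde w$ lifts to a submodular $\gamma$-perturbation via $\tilde f(S) := \sum_{e\in S}\tilde w_e$: additivity gives Property~1, the pointwise bound $\tilde f(S) \in [w(S),\gamma w(S)]$ gives Property~2, and $\tilde f_S(j)-f_S(j) = \tilde w_j - w_j \in [0,(\gamma-1)w_j] = [0,(\gamma-1)f(\{j\})]$ gives Property~3. Hence any instance stable in the sense of Definition~4.2 is stable in the additive sense.

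For the reverse direction, suppose $S^*$ is the unique additive $\gamma$-stable optimum, and assume for contradiction that some submodular $\gamma$-perturbation $\tilde f$ admits $S \in \I$ with $S \neq S^*$ and $\tilde f(S) \ge \tilde f(S^*)$. The key step is a telescoping estimate: since $f$ is additive, $f_T(j) = w_j$ for every set $T$, so Property~3 yields $\tilde f_T(j) \in [w_j,\gamma w_j]$ at every intermediate $T$. Ordering the marginals so that the elements of $S \cap S^*$ are added first gives
\[
\tilde f(S) \le \tilde f(S\cap S^*) + \gamma\, w(S\setminus S^*), \qquad \tilde f(S^*) \ge \tilde f(S\cap S^*) + w(S^*\setminus S),
\]
and combining these with $\tilde f(S) \ge \tilde f(S^*)$ yields $\gamma\, w(S\setminus S^*) \ge w(S^*\setminus S)$.

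To finish, I would then exhibit the explicit additive $\gamma$-perturbation $\tilde w_e = \gamma w_e$ for $e \in S\setminus S^*$ and $\tilde w_e = w_e$ otherwise, under which $\tilde w(S) - \tilde w(S^*) = \gamma\, w(S\setminus S^*) - w(S^*\setminus S) \ge 0$, contradicting the unique additive $\gamma$-stability of $S^*$. The main obstacle is conceptual rather than technical: the two classes of perturbations are \emph{not} literally equal (a submodular $\gamma$-perturbation of an additive $f$ need not itself be additive, e.g.\ one may take $\tilde f(\{1,2\})$ strictly below the additive value $2\gamma$), so the proof cannot proceed by set equality of the perturbation families and must instead route bad submodular perturbations through a carefully tailored additive one via the marginal bound above.
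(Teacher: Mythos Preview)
Your proof is correct and in fact more careful than the paper's. The paper's argument is a one-liner: since $f$ is additive, $f_S(j)=f(\{j\})$ for every $S$, so Property~3 reads $f(\{j\})\le \tilde f_S(j)\le \gamma f(\{j\})$, which (taking $S=\emptyset$) is exactly the classical condition $w_j\le \tilde w_j\le \gamma w_j$; the paper then remarks that Property~2 follows by additivity. This tacitly treats the additive perturbations as the relevant ones and does not address the point you explicitly flag, namely that a submodular $\gamma$-perturbation of an additive $f$ need not itself be additive, so the two families of perturbations are not literally equal.

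Your argument closes this gap: you keep the easy containment (additive $\gamma$-perturbations embed into the submodular class) but, for the converse, you do not try to show the classes coincide; instead you use Property~3 together with $f_T(j)=w_j$ to sandwich every marginal of $\tilde f$ in $[w_j,\gamma w_j]$, telescope through $S\cap S^*$, and deduce $\gamma\,w(S\setminus S^*)\ge w(S^*\setminus S)$, which you then realize by an explicit additive $\gamma$-perturbation. That is a genuinely additional step compared with the paper, and it is exactly what is needed to turn the informal ``specializes to'' into a rigorous equivalence of stability notions.
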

\begin{proof}
This follows easily since if the function $f$ is additive, then there will be no dependence of the element's $j$ marginal value on the current set $S$ and thus property 3 from the above $\gamma$\textbf{-perturbation} definition just becomes:
$$
0 \le \tilde{f}_S(j) - f_S(j) \le (\gamma -1)\cdot f(j) \iff 0\le \tilde{f}(j)-f(j) \le  (\gamma -1)\cdot f(j) \iff f(j)\le  \tilde{f}(j) \le \gamma \cdot f(j)
$$
which is exactly the standard notion of stability introduced by \cite{bilu2012stable}. Note that this also implies the first condition for all sets $S$: $f(S) \le \tilde{f}(S)\le \gamma f(S)$, by the additivity of $f$.
\end{proof}

We now prove a useful proposition that we will often use when proving recovery results for submodular maximization. Informally, we show that multiplying the marginal improvements of the choices made by an algorithm by $\gamma$ is a valid $\gamma$-perturbation. 

\begin{proposition}
\label{prop:perturb}
Let $f$ be a monotone submodular function.
Fix an ordered sequence of elements $e_1,e_2,\dots,e_k$, and let $\delta_i = f(\{e_1,\ldots,e_i\}) - f(\{e_1,\ldots,e_{i-1}\})$.
Then $\tilde{f}$ defined by $\tilde{f}(S) := f(S) + (\gamma-1) \sum_{i: e_i \in S} \delta_i $ is a valid $\gamma$-perturbation of $f$.
\end{proposition}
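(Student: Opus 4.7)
The plan is to verify, one by one, the three conditions in Definition \ref{d:sm} for the candidate $\tilde{f} = f + (\gamma-1)\,g$, where $g(S) := \sum_{i: e_i \in S} \delta_i$. The point of writing $\tilde f$ this way is that $g$ is a modular (additive) function with fixed coefficients $\delta_i \ge 0$ (nonnegativity follows from monotonicity of $f$). Hence $g$ is itself monotone and submodular, and adding a nonnegative multiple of it to the monotone submodular $f$ preserves both properties, yielding Property 1. The lower inequality $f \le \tilde{f}$ of Property 2 is immediate since $(\gamma-1)g \ge 0$.

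The main technical step, and the place I expect the real work, is the upper bound $\tilde f \le \gamma f$, which reduces to establishing $g(S) \le f(S)$ for every $S \subseteq X$. I would fix $S$ and enumerate the indices $i_1 < i_2 < \cdots < i_\ell$ for which $e_{i_j} \in S$. The key observation is that $\{e_{i_1},\ldots,e_{i_{j-1}}\} \subseteq \{e_1,\ldots,e_{i_j - 1}\}$, so submodularity of $f$ gives
\[
\delta_{i_j} \;=\; f(\{e_1,\dots,e_{i_j}\}) - f(\{e_1,\dots,e_{i_j-1}\}) \;\le\; f(\{e_{i_1},\ldots,e_{i_j}\}) - f(\{e_{i_1},\ldots,e_{i_{j-1}}\}).
\]
Summing over $j$ telescopes the right-hand side to $f(\{e_{i_1},\ldots,e_{i_\ell}\}) - f(\emptyset) \le f(S)$, using $f(\emptyset) \ge 0$. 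This gives exactly $g(S) \le f(S)$ and hence $\tilde f(S) \le \gamma f(S)$.

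Finally, for Property 3, I would compute $\tilde{f}_S(j) - f_S(j) = (\gamma-1)\bigl(g(S+j) - g(S)\bigr)$. If $j$ does not appear in the sequence, this difference is $0$ and both inequalities hold trivially. If $j = e_k$ and $e_k \notin S$, the difference equals $(\gamma-1)\delta_k$. The lower bound $0 \le (\gamma-1)\delta_k$ is again monotonicity of $f$, while the upper bound follows from submodularity applied to the comparison $\emptyset \subseteq \{e_1,\ldots,e_{k-1}\}$:
\[
\delta_k \;=\; f(\{e_1,\dots,e_k\}) - f(\{e_1,\dots,e_{k-1}\}) \;\le\; f(\{e_k\}) - f(\emptyset) \;\le\; f(\{j\}),
\]
so $\tilde f_S(j) - f_S(j) \le (\gamma-1) f(\{j\})$, completing the verification. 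No clever tricks are needed beyond this telescoping/submodularity comparison, which is the one spot where the inequality could have failed had $\tilde f$ been defined less carefully.
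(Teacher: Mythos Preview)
Your proof is correct and follows essentially the same approach as the paper: verify the three conditions of Definition~\ref{d:sm} directly, using that $g$ is nonnegative additive for Property~1, the submodularity bound $\sum_{i:e_i\in S}\delta_i \le f(S\cap\{e_1,\dots,e_k\}) \le f(S)$ for Property~2 (the paper states this in one line; you spell out the telescoping), and $\delta_i \le f(\{e_i\})$ for Property~3. One minor notational slip: in your Property~3 argument you reuse $k$ for a generic index when it already denotes the sequence length.
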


\begin{proof}
Let us verify the conditions of a $\gamma$-perturbation.

First, $\tilde{f}$ is monotone submodular, since it is a sum of a monotone submodular and a monotone additive function ($\delta_i \geq 0$ by monotonicity).

Second, we have $f(S) \leq \tilde{f}(S) = f(S) + (\gamma-1) \sum_{i: e_i \in S} \delta_i \leq f(S) + (\gamma-1) f(S \cap \{e_1,\ldots,e_k\})$ by submodularity, and by monotonicity this is at most $\gamma f(S)$.

Third, the marginal values of $\tilde{f}$ are $\tilde{f}_S(e_i) = f_S(e_i) + (\gamma-1) \delta_i \leq f_S(e_i) + (\gamma-1) f(\{e_i\})$ (and unchanged for elements other than the $e_i$). 
%First let's fix some notation: $A_i=\{e_1,e_2,\dots,e_i\}, A=A_k$ ($k$ elements were chosen by $A$) and the marginal value of element $e_i$ is: $\delta(e_i)=f_{A_{i-1}}(e_i)=f(A_i)-f(A_{i-1})$. By perturbing every marginal by a $\gamma$ factor (i.e. $\tilde{\delta}(e_i)=\gamma\delta(e_i)$) we respect \hyperref[property]{Property \ref{property}} as $\tilde{\delta}(e_i)=\gamma\delta(e_i)\le \delta(e_i)+(\gamma-1)f_{\emptyset}(e_i)=\delta(e_i)+(\gamma-1)f(e_i)$, where we made use of submodularity. For the total value of $A$'s solution we get: $\tilde{f}(A)=\sum_{i}\tilde{\delta}(e_i)=\sum_{i}\gamma\delta(e_i)=\gamma\sum_{i}\delta(e_i)=\gamma f(A)$, which complies with \hyperref[def:stability]{Definition \ref{def:stability}}. Of course, if instead of $\gamma$-perturbing \textit{all} the marginals we only perturbed \textit{some} of them, it would still be a valid $\gamma$-perturbation (it turns out only elements not included in optimum solutions need to be perturbed for the recovery results).
\end{proof}

\subsection{Greedy recovery and submodularity}

The main result here is that the standard greedy algorithm can recover the optimal solution of a $p$-extendible system, if the optimal solution is $(p+1)$-stable (as it was defined in \hyperref[submod-def]{Section~\ref{submod-def}}). 

\begin{theorem}[Greedy Recovery] \label{th:submod}
Given a monotone submodular function $f$ to maximize over a $p$-extendible system $(X,\I)$, if the optimal solution $S^*=\arg\max_{S\in \I}f(S)$ is $(p+1)$-stable, then the greedy algorithm recovers $S^*$ exactly.
\end{theorem}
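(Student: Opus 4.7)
The plan is to mirror the additive argument of Theorem~\ref{greedy-additive}, replacing the weight-boost trick with a carefully chosen submodular perturbation. Suppose for contradiction that greedy outputs $S \neq S^*$, and let $e_1$ be the first element greedy picks that is not in $S^*$. Writing $A$ for the greedy partial solution immediately before $e_1$ is selected, we have $A \subseteq S^*$ and $A + e_1 \in \I$. Applying $p$-extendibility with this $A$, $B = S^*$, and $e = e_1$ produces a set $Z \subseteq S^* \setminus A$ with $|Z| \leq p$ such that $J := (S^* \setminus Z) + e_1 \in \I$. Note $J \neq S^*$ because $e_1 \in J \setminus S^*$.

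Next I would exploit the greedy choice rule together with submodularity to bound $f(S^*) - f(J)$. Each $z \in Z$ satisfies $A + z \subseteq S^* \in \I$, so $A + z \in \I$ by downward closure, meaning $z$ was available to greedy when it chose $e_1$; hence $f_A(e_1) \geq f_A(z)$. Submodularity (together with $A \subseteq S^* \setminus Z$, since $Z \cap A = \emptyset$) gives
\[
f(S^*) - f(S^* \setminus Z) \;\leq\; \sum_{z \in Z} f_{S^* \setminus Z}(z) \;\leq\; \sum_{z \in Z} f_A(z) \;\leq\; p\, f_A(e_1) \;\leq\; p\, f(\{e_1\}),
\]
where the last step uses submodularity once more ($f_A(e_1) \leq f_\emptyset(e_1) = f(\{e_1\})$). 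Monotonicity gives $f(J) \geq f(S^* \setminus Z)$, so we conclude $f(S^*) - f(J) \leq p\, f(\{e_1\})$.

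The key step is constructing a valid $(p+1)$-perturbation that makes $J$ at least as good as $S^*$. I would define
\[
\tilde{f}(S) \;:=\; f(S) + p \cdot f(\{e_1\}) \cdot \mathbf{1}[e_1 \in S].
\]
This is a special case of Proposition~\ref{prop:perturb} with the singleton sequence $(e_1)$ and $\gamma = p+1$, so it is automatically a valid $(p+1)$-perturbation; alternatively one verifies the three conditions of Definition~\ref{d:sm} directly (the added term is monotone additive, bounded above by $p\,f(S)$ when $e_1 \in S$ by monotonicity, and contributes marginal $p\,f(\{e_1\})$ only for $j = e_1$). Since $e_1 \in J$ but $e_1 \notin S^*$, the bound above yields
\[
\tilde{f}(J) - \tilde{f}(S^*) \;=\; f(J) + p\,f(\{e_1\}) - f(S^*) \;\geq\; 0,
\]
contradicting the $(p+1)$-stability assumption that $S^*$ is the unique optimum under every such perturbation.

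The main obstacle I expect is pinning down the right perturbation: the additive proof simply scales $w(e_1)$ by $p$, but in the submodular setting one must ensure condition~3 of Definition~\ref{d:sm} holds for every $S$ and $j$, which is why the boost has to be proportional to $f(\{e_1\})$ rather than to any $S$-dependent marginal, and why the factor $(p+1)$ — rather than $p$ — is forced by the slack between $f_A(e_1)$ and $f(\{e_1\})$ needed to push condition~3 through cleanly.
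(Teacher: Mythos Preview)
Your proof is correct, and it is genuinely simpler than the paper's. The paper runs greedy to completion, builds the full chain $S^*=T_0\supseteq T_1\supseteq\cdots\supseteq T_k=\emptyset$ from the $p$-extendibility property (as in the standard approximation analysis of~\cite{calinescu2011maximizing}), perturbs the marginals of \emph{every} element of $S\setminus S^*$ via Proposition~\ref{prop:perturb}, and then compares $\tilde f(S)$ with $\tilde f(S^*)$ through a telescoping sum over the whole chain. You instead stop at the first mistake $e_1$, apply $p$-extendibility once to produce the single competitor $J=(S^*\setminus Z)+e_1$, and perturb only $e_1$; the submodular inequalities $f(S^*)-f(S^*\setminus Z)\le\sum_{z\in Z}f_A(z)\le p\,f_A(e_1)\le p\,f(\{e_1\})$ together with monotonicity do all the work. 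Your argument is a cleaner lift of the additive Theorem~\ref{greedy-additive} and isolates exactly why the stability threshold jumps from $p$ to $p+1$: the slack between $f_A(e_1)$ and $f(\{e_1\})$ forces the extra $+1$ in the perturbation factor. The paper's heavier machinery buys nothing additional for this theorem, though the chain construction is the standard tool when one wants statements about the final greedy solution $S$ itself (e.g., approximation ratios) rather than just a contradiction at the first deviation.
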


\begin{proof}
The proof generalizes the argument we used in the additive case so that we handle submodularity and the proving strategy resembles the proof of the approximation guarantee for the greedy algorithm for submodular maximization on $p$-extendible systems~\cite{calinescu2011maximizing}.  Let's denote by $S = \{e_1,\ldots,e_k\}$ the solution produced by Greedy (in the order that Greedy picked them) and $S^*$ the optimal solution. To give some intuition, in the additive case before, we used the property of $p$-extendibility in order to say that every element that appears in $S$ but not in $S^*$ could be ``boosted'' by a factor of $p$ to obtain an even better optimal solution, which would be a contradiction, because of the $p$-stability. Now, due to submodularity, we need to be careful that we make this exchange argument in a cautious manner.

For $0 \leq i \leq k$, let $S_i = \{e_1,e_2,\dots, e_i\}$ denote the first $i$ elements picked by Greedy (with $S_0=\emptyset$). Let $\delta_i = f_{S_{i-1}}(e_i)= f(S_i)-f(S_{i-1})$. %To give some intuition for the analysis that follows, for each element $e_i$ that Greedy picked, there are some ``conflicts'' when trying to augment the optimum solution $S^*$ by adding $e_i$, since $S^*\cup \{e_i\} \not\in \I$, and after removing those conflicts there is the set of ``remaining'' elements for the Greedy to choose from. We will now define two sequences of sets $S^*_i$ and $T_i$ with the former corresponding to the ``conflicts'' and the latter corresponding to the ``remaining'' elements. 
Using the $p$-extendibility property, we can find a chain of sets $S^* = T_0 \supseteq T_1\supseteq \dots \supseteq T_k = \emptyset$ such that for $1\le i\le k$:
$$S_i\cup T_i \in \I,  \ S_i\cap T_i = \emptyset \ \text{and}\ |T_{i-1} \setminus T_{i}| \leq p.$$
The above means that every element in $T_i$ is a candidate for Greedy in step $i+1$. We construct the chain as follows: Let $T_0=S^*$; we show how to construct $T_i$ from $T_{i-1}$:

\begin{enumerate}
\item If $e_i \in T_{i-1},$ we define $S^*_i=\{e_i\}$ and $T_i=T_{i-1}-e_i$. This corresponds to the trivial case when Greedy, at stage $i$, happens to choose an element $e_i$ that also belongs to the optimal solution $S^*$.
\item Otherwise ($e_i \notin T_{i-1}$), we let $S^*_i$ be a smallest subset of $T_{i-1}$ such that $(S_{i-1}\cup T_{i-1})\sm S^*_i + e_i$ is independent and since $\I$ is $p$-extendible, we have $|S^*_i|\le p$. We let $T_i=T_{i-1}\sm S^*_i$. 
%This corresponds to the case where Greedy picks an element that is not in the optimum solution $S^*$ and thus, we have to take care of the ``conflicts'' (i.e. $S^*_i$) and then define the ``remaining'' elements that may be used to extend Greedy (i.e. $T_i$).
\end{enumerate}

By the above definitions for $S_i,T_i,S^*_i$ it follows that $S_i\cup T_i \in \I$ and $S_i \cap T_i=\emptyset$. By the maximality of Greedy (stopping condition: $\{e| S_k+e \in \I\}=\emptyset$) and the fact that $S_k\cup T_k \in \I$, it also follows that $T_k=\emptyset$. Since Greedy could have picked, instead of $e_i$, any of the elements in $S^*_i$ (in fact $T_{i-1}$) we get: $\delta_i \ge \tfrac{1}{p}f_{S_{i-1}}(S^*_i)$ (recall that $|S^*_i|\le p$).

Let us assume now that the Greedy solution $S$ is not optimal.
We use \hyperref[prop:perturb]{Proposition~\ref{prop:perturb}} to define a $(p+1)$-perturbation that produces a new optimal solution.
Let's suppose $|S\sm S^*|=l$ and let's rename the elements $e_i$ such that $|S \sm S^*|=\{e_1,e_2,\dots, e_l\}$ in the order that the Greedy picked the elements.
Then we define $\tilde{f}(T)$ for every $T$ by $f(T) = f(T) + p \sum_{1 \leq i \leq l: e_i \in T} \delta_i$, where $\delta_i = f_{S_i}(e_i) = f(\{e_1,\ldots,e_i\}) - f(\{e_1,\ldots,e_{i-1})$. Using \hyperref[prop:perturb]{Proposition~\ref{prop:perturb}}, this is a valid $(p+1)$-perturbation.
For the greedy solution $S$, we obtain:
\[ \tilde{f}(S) =  f(S) + p\sum_{i=0}^{l-1} f_{S_i}(e_{i+1}) \ge\]
\[ \ge f(S) + \sum_{i=0}^{l-1}f_{S_i}(S^*_{i+1}) \ge 
f(S) + \sum_{i=0}^{l-1}f_S(S^*_{i+1})\ge f(S) + f_S(S^*\sm S)= \]
\[=f(S) + \left( f((S^*\sm S)\cup S)- f(S)\right)=f(S^*\cup S)\ge f(S^*) = \tilde{f}(S^*).\]
We ended up with $\tilde{f}(S) \ge \tilde{f}(S^*)$ which means that $S^*$ is no longer the unique optimum and hence we get a contradiction to the $(p+1)$-stability of $S^*$.
\end{proof}

\begin{remark}
If instead of exact access to the values of the function $f$, we had an $\alpha$-approximate oracle, then the proof easily extends to handle this case as well. In particular, suppose each element $e_i$ picked by Greedy at stage $i$ satisfies $f_{S_{i-1}}(e_i)\ge \alpha \max_{e\in A_i} f_{S_{i-1}}(e)$, where $A_i$ is the set of all candidate augmentations of $S_{i-1}$. Here $\alpha\le 1$. We would then have that the greedy marginal improvement $\delta_i\ge \tfrac{\alpha}{p}f_{S_{i-1}}(S^*_i)$ and thus we would need $\gamma-1=\tfrac{p}{\alpha}$ leading to exact recovery of $\left(\tfrac{p+\alpha}{\alpha}\right)$-stable instances ($\alpha\le 1$).
\end{remark}

\subsection{Welfare Maximization}
In many situations, like the welfare maximization problem~\cite{lehmann2001combinatorial,nisan2007computationally,vondrak2008optimal}, the submodular function $f$ we wish to maximize has a special form, e.g. it may be written as a sum of other submodular functions $f_i$ (each of which may correspond to the player's $i$ valuation on different allocations of the items). In this special case, we have $f(S)=\sum_{i=1}^nf_i(S)$ and from \hyperref[th:submod]{Theorem~\ref{th:submod}} Greedy recovers the optimal solution $S^*$ for the case of matroids, which are 1-extendible, if $S^*$ is 2-stable. 

However, for \textit{sum} functions $f=\sum_if_i$, we may as well hope that a stronger recovery result is true, i.e. that Greedy recovers the optimal solution of $\max\{f(S)=\sum_if_i: S\in \I\}$, where the optimum is 2-stable only with respect to $2$-perturbations of the individual functions $f_i$. This is indeed true (for the proof, we refer the reader to \hyperref[app:submod]{Appendix~\ref{app:submod}}).

\begin{theorem} \label{th:submod}
Let $(X,\I)$ be a matroid on the elements of $X$, let $B_1,B_2,\dots,B_k$ be a partition of $X$, $f_i: 2^{B_i}\to \RR^+\cup\{0\}$, for $i\in \{1,2,\dots,k\}$ be monotone submodular and let $f=\sum_{i=1}^kf_i$. Suppose the optimal solution $S^*$ of $\max\{f(S): S\in \I\}$ is $2$-stable only with respect to individual perturbations of the functions $f_i$. Then, Greedy recovers $S^*$.
\end{theorem}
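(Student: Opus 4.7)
The plan is to mimic the argument in Theorem~\ref{th:submod} for $p=1$, but to show that the single perturbation $\tilde{f}$ produced there actually decomposes as a sum of $2$-perturbations of the individual $f_j$'s. This way the contradiction obtained in the earlier proof will already be a contradiction to the (weaker) stability assumption allowed here.

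First, I would run Greedy to obtain $S=\{e_1,\ldots,e_k\}$ (in pick order) and, assuming $S \ne S^*$, reuse the matroid exchange argument (the case $p=1$) to build a chain $S^*=T_0\supseteq T_1\supseteq\cdots\supseteq T_k=\emptyset$ with $|T_{i-1}\setminus T_i|\le 1$ and $S_i\cup T_i\in\I$, and to define $\delta_i=f_{S_{i-1}}(e_i)$. As in the proof of Theorem~\ref{th:submod}, this gives $\delta_i \ge f_{S_{i-1}}(S_i^*)$, and applying Proposition~\ref{prop:perturb} with $\gamma=2$ produces the perturbation
\[ \tilde{f}(T) \;=\; f(T) + \sum_{i:\,e_i\in T}\delta_i, \]
for which the computation in Theorem~\ref{th:submod} yields $\tilde{f}(S)\ge \tilde{f}(S^*)$.

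The key new step is the decomposition. For each $i$, let $j(i)$ be the unique index with $e_i\in B_{j(i)}$. Since $f_j$ depends only on $B_j$, every cross-term vanishes and
\[ \delta_i \;=\; f_{S_{i-1}}(e_i) \;=\; (f_{j(i)})_{S_{i-1}\cap B_{j(i)}}(e_i). \]
So for each $j$, if we list the greedy picks that lie in $B_j$ in order as $e_{i_1^{(j)}},e_{i_2^{(j)}},\ldots$, the quantity $\delta_{i_\ell^{(j)}}$ is exactly the $\ell$-th marginal of $f_j$ along this subsequence. Define
\[ \tilde{f}_j(T) \;:=\; f_j(T) + \sum_{\ell:\,e_{i_\ell^{(j)}}\in T} \delta_{i_\ell^{(j)}}, \qquad T\subseteq B_j. \]
By Proposition~\ref{prop:perturb} applied to $f_j$ with the above subsequence and $\gamma=2$, each $\tilde{f}_j$ is a valid $2$-perturbation of $f_j$. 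Moreover, summing over $j$ and using that each $e_i$ lies in a unique $B_{j(i)}$,
\[ \sum_{j=1}^k \tilde{f}_j(S\cap B_j) \;=\; \sum_j f_j(S\cap B_j) + \sum_{i:\,e_i\in S}\delta_i \;=\; \tilde{f}(S), \]
so $\tilde{f}$ is exactly a sum of individual $2$-perturbations.

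Combining the two parts, the inequality $\tilde{f}(S)\ge \tilde{f}(S^*)$ with $S\ne S^*$ contradicts $2$-stability with respect to perturbations of the individual $f_i$'s, forcing $S=S^*$. I expect the only delicate point to be the verification that $\delta_i$ depends solely on $f_{j(i)}$ and on the previously chosen elements in $B_{j(i)}$; this is where the assumption that $\{B_j\}$ is a partition (so that each $f_j$ ignores elements outside $B_j$) is essential and is what lets the single perturbation be replayed as $k$ independent ones.
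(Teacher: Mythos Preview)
Your proposal is correct and follows essentially the same route as the paper's proof: both hinge on the observation that, because the $B_j$ partition $X$ and each $f_j$ ignores elements outside $B_j$, the greedy marginal $\delta_i=f_{S_{i-1}}(e_i)$ equals the marginal of $f_{j(i)}$ along the subsequence of picks in $B_{j(i)}$, so Proposition~\ref{prop:perturb} applies block by block and the global $2$-perturbation splits into individual $2$-perturbations of the $f_j$. The only cosmetic difference is that the paper perturbs just the elements of $S\setminus S^*$ (so that $\tilde f(S^*)=f(S^*)$ directly), whereas you perturb all of $S$; since the extra terms for $e_i\in S\cap S^*$ appear identically in $\tilde f(S)$ and $\tilde f(S^*)$ and cancel, the inequality $\tilde f(S)\ge\tilde f(S^*)$ goes through either way.
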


%\begin{proof}
%We note that $B_i$'s form a partition of $X$ but it is not tied to the the matroid in any way. To avoid confusion, we should first emphasize the greedy algorithm in this case: It starts with the empty set $S_0=\emptyset$, at step $t$ it selects: $e=\argmax_{x\in X}\{f(S_{t-1}+x)-f(S_{t-1})\}$ subject to the matroid constraint and it updates $S_t\leftarrow S_{t-1}+e$. This is slightly in contrast with the standard greedy algorithm in welfare maximization that first picks an item and it gives it to the player so that it yields maximum marginal improvement.

%Suppose greedy outputs $S\neq S^*$ and that it chose elements $A_i\subseteq B_i$. Let $S_e$ be the greedy solution right before adding element $e$. Then a 2-perturbation of the individual functions is:
%\[
%\tilde{f_i}(A_i)=f_i(A_i)+\sum_{e\in (S\cap B_i)\sm S^*}f_i((B_i\cap S_e)+e)-f_i(B_i\cap S_e)
%\]
%Now coming back to the total welfare function $f$ we get:
%\[
%\tilde{f}(S)=\sum_{i=1}^k\tilde{f}_i(S\cap B_i)=f(S)+\sum_{e\in S\sm S^*}f_{S_e}(e)\ge f(S)+\sum_{e'\in S^*\sm S, e\leftrightarrow e'}f_{S_e}(e')\ge
%\]
%\[
%\ge f(S)+f_S(S^*\sm S)\ge f(S^*\cup S)\ge f(S^*)=\tilde{f}(S^*)
%\]
%where we made use of the greedy criterion, submodularity and the matroid matching $e\leftrightarrow e'$ between elements $e\in S\sm S^*$ and $e' \in S^*\sm S$. We got $\tilde{f}(S)\ge \tilde{f}(S^*)$, hence a contradiction to the 2-stability of $S^*$ and hence $S\equiv S^*$ and greedy exactly recovers the optimum solution. 
%\end{proof}

%%% Local Variables:
%%% mode: latex
%%% TeX-master: "main"
%%% End:

\section{Local Search Performance}\label{sec:local search}

In this section we discuss {\em local search}~\cite{lenstra2003local} (described in \hyperref[sec:preliminaries]{Section~\ref{sec:preliminaries}}).
Local search often gives better results than Greedy, at the cost of a slower running time --- for example for submodular maximization subject to the intersection of $k$ matroids \cite{lee2009submodular,filmus2012power}, and for $k$-set packing \cite{SviridenkoW13,Cygan13,FurerY14}. For some interesting recent results about local search in \textit{beyond-worst-case} settings and on geometric optimization we refer the reader to~\cite{cohen2016local,cohen2014unreasonable,cohen17}.

Somewhat surprisingly, it was not known (to our knowledge) how local search performs for $p$-systems and $p$-extendible systems. (We recall that the greedy algorithm gives a factor of $1/p$ for maximization of an additive function and $1/(p+1)$ for maximization of a monotone submodular function under these constraints.)
Here, we prove that local search in fact performs worse than Greedy for these constraints. Although it gives a $1/p$-approximation for cardinality maximization under a $p$-system constraint (essentially by definition), it does not give any bounded approximation factor for additive function maximization under a $p$-system, and only a $1/p^2$-approximation under a $p$-extendible system.

\subsection{Local search fails for $p$-systems}
We construct simple examples where local search will not recover any fraction of the maximum-weight solution for $p$-systems (even if it is arbitrarily stable, $p=2$, and even if we allow large exchange neighborhoods). In particular, consider a ground set $X = A \cup \{e^*\}$ where $|A|=n$. The independent sets of $\I$ are:
\begin{itemize}
\item any subset of $A$, or
\item $e^*$ plus any subset of at most $n/2$ elements of A. 
\end{itemize}

Note that this is a 2-system, because for $S \subseteq X$, any independent subset of $S$ can be extended to an independent set of size at least $\min \{|S|, n/2\}$, and the maximum independent subset of $S$ has size at most $\min \{|S|,n\}$. The weights could be 0 on $A$, and 1 on the special element $e^*$. So the optimum is $w(e^*)$ = 1 (observe that the optimal solution is $c$-stable for arbitrarily large $c$). However, $A$ is a local optimum, unless we are willing to swap out $n/2$ elements, which is not possible for efficient local search.

\subsection{Lower bound for $p$-extendible systems}

Let us consider the following instance. Let $X = A \cup B$ where $A, B$ are disjoint sets. We define $\I \subseteq 2^X$ as follows: $S \in \I$ iff
\begin{itemize}
\item $|S \cap A| + p |S \cap B| \leq |A|$, or
\item $p|S \cap A| + |S \cap B| \leq |B|$.
\end{itemize}

\begin{lemma}
For any $A,B$ disjoint, the above is a $p$-extendible system.
\end{lemma}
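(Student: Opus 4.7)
The plan is to reformulate the constraint by setting $\alpha(S) := |S \cap A| + p\,|S \cap B|$ and $\beta(S) := p\,|S \cap A| + |S \cap B|$, so that $\I = \mathcal{M}_1 \cup \mathcal{M}_2$ where $\mathcal{M}_1 := \{S : \alpha(S) \leq |A|\}$ and $\mathcal{M}_2 := \{S : \beta(S) \leq |B|\}$. The structural fact I would lean on is that every single element has $\alpha$-value and $\beta$-value each equal to either $1$ or $p$, so any $k$ elements together contribute at least $k$ to either functional, while inserting one new element shifts either functional by at most $p$.

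Given $I \sse J$ in $\I$ with $I+e \in \I$ (I use $I,J$ for the independent sets to avoid notational clash with the ground-set partition $A,B$), I would first invoke the $A \leftrightarrow B$ symmetry of the whole construction---which swaps $\alpha$ with $\beta$ and $\mathcal{M}_1$ with $\mathcal{M}_2$---to reduce to the case $I+e \in \mathcal{M}_1$. Then I would split on the size of $J \sm I$. In the easy regime $|J \sm I| \leq p$ the choice $Z = J \sm I$ trivially works, since $(J \sm Z)+e = I+e \in \I$ by hypothesis, and $|Z| \leq p$ is immediate. In the regime $|J \sm I| > p$ I would let $Z$ consist of any $p$ elements of $J \sm I$ and aim to land $(J \sm Z)+e$ in whichever of $\mathcal{M}_1, \mathcal{M}_2$ contains $J$; a brief calculation from $\alpha(J) \leq |A|$ (respectively $\beta(J) \leq |B|$) together with $\alpha(\{e\}), \beta(\{e\}) \leq p$ shows that the required reduction in that functional is at most $p$, which is exactly what any $p$ elements of $J \sm I$ supply via the structural fact above.

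The step I expect to be the main obstacle is that $\I$ is a \emph{union} rather than an intersection of two knapsack-type systems, and unions of $p$-extendible systems are generally not $p$-extendible. The coupling that rescues us is precisely what I want to present carefully: in the regime $|J \sm I| > p$ we stay inside whichever $\mathcal{M}_i$ already contains $J$, while in the regime $|J \sm I| \leq p$ we fall back on membership inherited from $I+e$, so we never have to ``switch sides'' of the union mid-argument. I would also verify the symmetry reduction (checking that swapping $A$ and $B$ genuinely produces an isomorphic instance) and the degenerate case $J = I$, where $Z = \emptyset$ suffices and the conclusion follows directly from $I+e \in \I$.
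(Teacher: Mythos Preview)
Your proposal is correct and follows essentially the same route as the paper: both arguments split on whether $|J\setminus I|\le p$ (taking $Z=J\setminus I$ in that case) and otherwise pick an arbitrary $Z\subseteq J\setminus I$ of size $p$, then verify that $(J\setminus Z)+e$ satisfies whichever of the two knapsack inequalities $J$ already satisfied, using that removing $p$ elements drops the relevant functional by at least $p$ while adding $e$ raises it by at most $p$. The one cosmetic difference is that your symmetry reduction to $I+e\in\mathcal{M}_1$ is never actually used---in the small-difference case you only need $I+e\in\I$, and in the large-difference case you branch on which $\mathcal{M}_i$ contains $J$, not $I+e$---so you can drop that step without loss.
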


\begin{proof}
Let $S \subseteq T$ and $i \in X \setminus T$ be such that $S+i \in \I$ and $T \in \I$. We need to prove that there is $Z \subseteq T \setminus S, |Z| \leq p$ such that $(T \setminus Z) + i \in \I$. We can assume that $|T \setminus S| > p$, because otherwise we can set $Z = T \setminus S$ and obviously $(T \setminus Z) + i = S + i \in \I$.
Assuming $|T \setminus S| > p$, let $Z$ be an arbitrary set of $p$ elements from $T \setminus S$. We consider 2 cases: If $|T \cap A| + p |T \cap B| \leq |A|$, then $|(T \setminus Z) \cap A| + p |(T \setminus Z) \cap B| \leq |A| - p$. Adding the element $i$ can increase the left-hand side by at most $p$, and so $|(T \setminus Z + i) \cap A| + p |(T \setminus Z + i) \cap B| \leq |A|$. Similarly, in the second case, if $p |T \cap A| + |T \cap B| \leq |B|$, then $p |(T \setminus Z) \cap A| + |(T \setminus Z) \cap B| \leq |B| - p$. Adding the element $i$ can increase the left-hand side by at most $p$, and so $p |(T \setminus Z + i) \cap A| + |(T \setminus Z + i) \cap B| \leq |B|$. 
\end{proof}

Now we choose the cardinalities of $A$ and $B$ and the weights of their elements appropriately to get a negative result.

\begin{lemma}
For $\epsilon>0$, let $|A| = n$ and $|B| = (p - \epsilon) n$, and set the weights as $w_a = 1$ for $a \in A$ and $w_b = p - \epsilon$ for $b \in B$.
Then $A$ is a local optimum of value $w(A) = w(B) / (p - \epsilon)^2$, unless the local search explores exchanges of size at least $\frac{\epsilon}{p} n$.
\end{lemma}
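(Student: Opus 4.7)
The plan is to consider an arbitrary candidate swap starting from $A$, write $S = (A \setminus R) \cup T$ with $R \subseteq A$ and $T \subseteq B$, put $k = |R|$ and $\ell = |T|$, and show that the weight change $(p-\epsilon)\ell - k$ cannot be positive unless $k \ge \epsilon n / p$. Before that, I would quickly verify the claimed numeric value: $w(A) = n$, while $w(B) = (p-\epsilon) \cdot |B| = (p-\epsilon)^2 n$, so $w(A) = w(B)/(p-\epsilon)^2$ as stated, and $A$ itself is feasible because $|A \cap A| + p|A \cap B| = n = |A|$ saturates the first constraint.

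Next I would plug $S$ into the two alternative feasibility conditions defining $\I$ and analyze each branch separately. The first condition $|S\cap A| + p|S\cap B| \le |A|$ becomes $(n - k) + p\ell \le n$, i.e.\ $p\ell \le k$; combined with the expression for the weight change this yields $(p-\epsilon)\ell - k \le (p-\epsilon)(k/p) - k = -\epsilon k/p \le 0$. So along this branch, no swap (of any size) can strictly improve the objective.

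Thus every improving swap must use the second condition $p|S\cap A| + |S\cap B| \le |B|$, which becomes $p(n-k) + \ell \le (p - \epsilon) n$, i.e.\ $\ell \le pk - \epsilon n$. Since $\ell \ge 0$, this already forces $pk \ge \epsilon n$, i.e.\ $k \ge \epsilon n / p$. Consequently any local search that restricts itself to neighborhoods where strictly fewer than $\epsilon n/p$ elements are removed from the current solution cannot escape $A$, which is exactly the claim.

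The main (small) obstacle is just being careful about which branch of the disjunction in the definition of $\I$ is active and noting that the ``balanced'' branch is automatically weight-decreasing; once this case split is clearly separated, the feasibility arithmetic in the ``imbalanced'' branch forces the quantitative lower bound on the exchange size with no further work. A short closing remark would then observe that $B$ is globally optimal with value $(p-\epsilon)^2 n$, so the locality gap of local search with subcritical neighborhoods is at least $(p-\epsilon)^2$, matching the upper bound direction of the $p^2$-approximation claim.
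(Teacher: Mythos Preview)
Your proof is correct and follows essentially the same approach as the paper: split on which of the two disjunctive feasibility conditions the swapped set $S=(A\setminus R)\cup T$ satisfies, observe that the first condition forces $p\ell\le k$ and hence a nonpositive weight change, and observe that the second condition together with $\ell\ge 0$ forces $k\ge \epsilon n/p$. Your write-up is, if anything, a bit more explicit than the paper's in handling the first branch (the paper just asserts the swap ``results in a solution of lower weight'' without the inequality chain), but the structure and the key arithmetic are the same.
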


\begin{proof}
Both $A$ and $B$ are independent sets. 
Note that for any $i \in B$, we need to remove $Z \subseteq A$ of cardinality at least $|Z| = p$ to obtain $S = (A \setminus Z) + i$ satisfying $|S \cap A| + p|S \cap B| \leq |A|$. More generally, for $Y \subseteq B$, we need to remove $Z \subseteq A, |Z| = p|Y|$ to obtain $S = (A \setminus Z) \cup Y$ that satisfies $|S \cap A| + p|S \cap B| \leq |A|$. Possibly, we could satisfy the second condition, $p|S \cap A| + |S \cap B| \leq |B|$, but this will not happen unless $|A \setminus Z| = |S \cap A| \leq |B| / p = (1 - \frac{\epsilon}{p}) n$. Therefore, we would need to remove $Z$ of cardinality at least $\frac{\epsilon}{p} n$. If the swaps considered are smaller than $\frac{\epsilon}{p} n$ then $A$ is a local optimum because adding $Y \subseteq B$ and removing $Z \subseteq A, |Z| = p |Y|$ results in a solution of lower weight. In conclusion, $A$ is a local optimum of value $w(A) = n$, while the optimum is $OPT = w(B) = (p-\epsilon)^2 n$.
\end{proof}

\subsection{Upper bound for $p$-extendible systems}

Here we prove that local search does in fact provide a $1/p^2$-approximation for weighted maximization under a $p$-extendible system. More generally, we prove (here, we will ignore the technicalities of stopping the local search in polynomial time as this can be handled using standard techniques, while losing $1/poly(n)$ in the approximation factor) the following:

\begin{theorem} \label{th:LS-approx}
For any $p$-extendible system $\I \subseteq 2^X$ and a monotone submodular function $f:2^X \rightarrow \RR_+$,
local search with $(p,1)$-swaps (including at most $1$ element and removing at most $p$ elements) provides a $1/(p^2+1)$-approximation. For additive $f$, the factor is $1/p^2$.
%If $f$ is additive, the approximation factor is $1/p^2$.
\end{theorem}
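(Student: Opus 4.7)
My plan is to follow the standard exchange-and-charge template for local-search analyses. Let $S$ denote the output of $(p,1)$-local search and $O \in \arg\max_{T \in \I} f(T)$ the global optimum. For each $o \in O \setminus S$, I apply the $p$-extendibility axiom to $A = S \cap O \subseteq B = S$ and element $o$---valid because $A + o \subseteq O \in \I$---to obtain a swap-out set $Z_o \subseteq S \setminus O$ with $|Z_o| \leq p$ and $(S \setminus Z_o) + o \in \I$. Each $(Z_o,\{o\})$ is then a feasible $(p,1)$-swap.

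Local optimality of $S$ next yields $f(S) \geq f((S \setminus Z_o) + o)$, which combined with the submodular inequality $f_{S \setminus Z_o}(o) \geq f_S(o)$ rearranges into $f_S(o) \leq f(S) - f(S \setminus Z_o)$. Summing over $o \in O \setminus S$ and invoking the standard submodular estimate $f(O) \leq f(O \cup S) \leq f(S) + \sum_o f_S(o)$ gives
\[
f(O) - f(S) \;\leq\; \sum_{o \in O \setminus S} \bigl[ f(S) - f(S \setminus Z_o) \bigr]. \qquad (\star)
\]

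The crux---and what I expect to be the main obstacle---is to bound the right-hand side of $(\star)$ by $p^2 f(S)$. My strategy combines two ingredients. First, a multiplicity bound $n_z := |\{o : z \in Z_o\}| \leq p^2$ for each $z \in S \setminus O$, obtained by choosing the $Z_o$'s iteratively through a nested-chain construction analogous to the greedy-recovery proofs of \hyperref[sec:additive]{Sections~\ref{sec:additive}} and \hyperref[sec:submodular]{\ref{sec:submodular}}, together with the $p$-system consequence $|O| \leq p|S|$ that gives the global count $\sum_o |Z_o| \leq p|O \setminus S| \leq p^2 |S|$ and enables balancing into a pointwise bound. Second, the supermodularity of $T \mapsto f(S) - f(S \setminus T)$ (an immediate consequence of submodularity of $f$), which makes the contribution of any sub-family of pairwise disjoint swap-out sets telescope to at most $f(S)$. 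A greedy coloring of the conflict multi-hypergraph---feasible once the multiplicity is bounded---partitions $\{Z_o\}$ into $p^2$ disjoint sub-families, and aggregating across them delivers $\sum_o [f(S) - f(S \setminus Z_o)] \leq p^2 f(S)$, whence $f(O) \leq (p^2+1) f(S)$.

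For the additive case the intermediate inequality becomes the cleaner $w(Z_o) \geq w(o)$, so the supermodular/coloring step can be bypassed: $w(O \setminus S) \leq \sum_o w(Z_o) = \sum_{z \in S \setminus O} n_z w(z) \leq p^2 w(S \setminus O)$ once the multiplicity bound is in place, and the trivial $w(O \cap S) \leq p^2 w(O \cap S)$ (for $p \geq 1$) then yields the sharper $w(O) \leq p^2 w(S)$, eliminating the ``$+1$'' slack.
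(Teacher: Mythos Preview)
Your setup through $(\star)$ is correct, but the crux---bounding $\sum_{o}\bigl[f(S)-f(S\setminus Z_o)\bigr]$ by $p^2 f(S)$---does not go through as outlined. There are two separate gaps. First, the pointwise multiplicity bound $n_z\le p^2$ is not justified: the global count $\sum_o|Z_o|\le p\,|O\setminus S|\le p^2|S|$ gives only an \emph{average} of $p^2$ over $S$, and the $p$-extendibility axiom merely asserts the existence of \emph{some} $Z_o$---it offers no control over which elements appear, so there is nothing to ``balance.'' A nested-chain construction in the style of the greedy-recovery proofs does produce disjoint removal sets, but those certify $O_i\cup T_i\in\I$ for a growing prefix $O_i$ of $O$ and a shrinking $T_i\subseteq S$; they are not of the single-swap form $(S\setminus Z)+o\in\I$ that your local-optimality inequality consumes. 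Second, even granting $n_z\le p^2$, the coloring claim fails: partitioning $\{Z_o\}$ into pairwise-disjoint color classes is hypergraph edge-coloring, and with edge size $\le p$ and vertex degree $\le p^2$ each $Z_o$ can intersect up to $p(p^2-1)$ others, so a greedy coloring may need on the order of $p^3$ colors, not $p^2$. The additive case inherits the first gap.

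The paper's proof avoids both obstacles via a device your plan does not contain: a \emph{greedy ordering} of the local optimum itself. Write $A=\{a_1,\ldots,a_k\}$ so that $a_i$ maximizes $f_{\{a_1,\ldots,a_{i-1}\}}(\cdot)$ over the remaining elements of $A$, and run the chain construction in the opposite direction to partition the optimum $B$ into blocks $B_1,\ldots,B_k$ with $|B_i|\le p$ and $\{a_1,\ldots,a_{i-1},b\}\in\I$ for every $b\in B_i$. For such $b$, $p$-extendibility yields a swap set $Z\subseteq A\setminus\{a_1,\ldots,a_{i-1}\}$ of size $\le p$ with $(A\setminus Z)+b\in\I$; because each $z\in Z$ lies in the tail, the ordering together with submodularity gives $f(A)-f(A\setminus Z)\le p\,f_{\{a_1,\ldots,a_{i-1}\}}(a_i)$. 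Local optimality then forces $f_A(b)\le p\,f_{\{a_1,\ldots,a_{i-1}\}}(a_i)$, and summing over $b\in B_i$ (at most $p$ terms) and then over $i$ yields $f_A(B)\le p^2\sum_i f_{\{a_1,\ldots,a_{i-1}\}}(a_i)=p^2 f(A)$ directly---no multiplicity or coloring argument needed. The greedy ordering of $A$ is the missing idea.
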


\begin{proof}
Let $A$ be a local optimum under $(p,1)$-swaps, and let $B$ be an optimal solution. (For convenience, let us also assume that we always try to add elements to $A$ if possible, even if they bring zero marginal value.) We proceed in two steps, the first one inspired by the analysis of the greedy algorithm for $p$-extendible systems \cite{calinescu2011maximizing} and the second one similar to other analyses of local search.

Let $A = \{a_1,\ldots,a_k\}$ be a greedy ordering of $A$ in the sense that $a_1$ is the element of $A$ maximizing $f_\emptyset(a_1)$; given $a_1$, $a_2$ is the element of $A-a_1$ maximizing $f_{\{a_1\}}(a_2)$, $a_3$ is the element of $A-a_1-a_2$ maximizing $f_{\{a_1,a_2\}}(a_3)$, etc. Using the $p$-extendible property, there is a subset $B_1 \subseteq B, |B_1| \leq p$ such that $(B \setminus B_1) + a_1 \in \I$. Further, since $\{a_1,a_2\} \in \I$, there is a subset $B_2 \subseteq B \setminus B_1, |B_2|\leq p$ such that $(B \setminus (B_1 \cup B_2)) \cup \{a_1,a_2\} \in \I$, etc. Generally, there are disjoint subsets $B_1,\ldots,B_k \subseteq B, |B_i| \leq p$ such that $(B \setminus (B_1 \cup \ldots B_i)) \cup \{a_1,\ldots,a_i\} \in \I$. In fact, if $|A| = k$, the sets $B_1,\ldots,B_k$ form a partition of $B$. Otherwise there would be additional elements in $B \setminus (B_1 \cup \ldots \cup B_k)$ which can be added to $A$, which would contradict the local optimality of $A$.

Now, we claim that for each $b \in B_i$, we have $f_A(b) \leq p f_{\{a_1,\ldots,a_{i-1}\}}(a_i)$. If not, we would be able to add $b$ and, since $\{a_1,\ldots,a_{i-1},b\} \in \I$, we could remove at most $p$ elements $Z \subseteq A \setminus \{a_1,\ldots,a_{i-1}\}$ so that $(A \setminus Z) + b \in \I$. By submodularity and the greedy ordering, we would have $f(A \setminus Z) \geq f(A) - p f_{\{a_1,\ldots,a_{i-1}\}}(a_i)$ and again by submodularity, we would have $f((A \setminus Z) + b) \geq f(A \setminus Z) + f_A(b) > f(A \setminus Z) + p f_{\{a_1,\ldots,a_{i-1}\}}(a_i) \geq f(A)$. Therefore, this would be an improving local swap.

Since $A$ is a local optimum, we conclude that $f_A(b) \leq p f_{\{a_1,\ldots,a_{i-1}\}}(a_i)$ for each $b \in B_i$. Since $B = B_1 \cup \ldots \cup B_k$ and $|B_i| \leq p$, we have by submodularity
$$ f_A(B) \leq \sum_{i=1}^{k} \sum_{b \in B_i} f_A(b) \leq \sum_{i=1}^{k} |B_i| p f_{\{a_1,\ldots,a_{i-1}\}}(a_i)
\leq p^2 \sum_{i=1}^{k} f_{\{a_1,\ldots,a_{i-1}\}}(a_i) \leq p^2 f(A) $$
For $f$ monotone submodular, we have $f(B) \leq f(A) + f_A(B) \leq (p^2+1) f(A)$.
For $f$ additive, we have $f(B) = f_A(B) \leq p^2 f(A)$. This completes the proof.
\end{proof}

\subsection{Recovery for $p$-extendible systems}
%Note that in the proof of \hyperref[th:LS-approx]{Theorem \ref{th:LS-approx}} we can forget about $A\cap B$ and restrict our attention only to comparing the value of $A\sm B$ and $B\sm A$. This turns out to be useful for exact recovery as we can perturb only $A\sm B$. The following theorem intuitively tells us that \textit{local optima of stable instances are global optima}.
Note that in the proof of \hyperref[th:LS-approx]{Theorem \ref{th:LS-approx}}, if we focus on comparing the values of $A\sm B$ and $B\sm A$, we will be able to get exact recovery as we can perturb only $A\sm B$. The following theorem intuitively tells us that \textit{local optima of stable instances are global optima}.
\begin{theorem} \label{th:LS-recovery}
Given a $p$-extendible system $\I \subseteq 2^X$ and a monotone submodular function $f:2^X \rightarrow \RR_+\cup\{0\}$ we wish to maximize, if the optimal solution $B$ is $(p^2+1)$-stable, then local search with $(p,1)$-swaps exactly recovers it. If $f$ is additive, recovery holds if $B$ is $p^2$-stable.
\end{theorem}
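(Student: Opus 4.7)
The plan is to argue by contradiction: assuming a local optimum $A \neq B$, I will construct an explicit $(p^2+1)$-perturbation $\tilde f$ of $f$ with $\tilde f(A) \geq \tilde f(B)$, violating the uniqueness of $B$. The crucial feature, flagged in the remark preceding the statement, is that the perturbation must leave $f(B)$ unchanged, so I must boost marginals only along elements of $A \setminus B$.

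Fix a greedy ordering $A = \{a_1,\ldots,a_k\}$ and set $\delta_i = f_{\{a_1,\ldots,a_{i-1}\}}(a_i)$; list $A \setminus B = \{a_{i_1},\ldots,a_{i_\ell}\}$ in the induced order. Starting from $A \cap B \subseteq B$ and inserting $a_{i_1},\ldots,a_{i_\ell}$ one at a time while preserving independence, $p$-extendibility produces disjoint sets $Z_1,\ldots,Z_\ell$ with $|Z_j| \leq p$ and, crucially, $Z_j \subseteq B \setminus A$, such that $(B \setminus (Z_1 \cup \cdots \cup Z_j)) \cup (A \cap B) \cup \{a_{i_1},\ldots,a_{i_j}\} \in \I$ for every $j$. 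Extending from $A \cap B$ rather than from $B$ itself is exactly what forces $Z_j \subseteq B \setminus A$; this is the step I expect to be the main technical subtlety. Using the convention of Theorem~\ref{th:LS-approx} that local search adds any zero-marginal element whenever feasible, a maximality argument yields $B \setminus A = Z_1 \cup \cdots \cup Z_\ell$. Then the local-optimality swap argument from the same theorem---now using a swap of $b \in Z_j$ for at most $p$ elements of $\{a_{i_j},\ldots,a_{i_\ell}\}$---gives $f_A(b) \leq p \delta_{i_j}$ for every $b \in Z_j$, and submodularity of $f_A(\cdot)$ yields
\[ f_A(B \setminus A) \;\leq\; \sum_{j=1}^{\ell} |Z_j| \cdot p \delta_{i_j} \;\leq\; p^2 \sum_{j=1}^{\ell} \delta_{i_j}. \]

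Define $\tilde f(S) = f(S) + p^2 \sum_{j:\, a_{i_j} \in S} \delta_{i_j}$, a restriction of Proposition~\ref{prop:perturb} to the sequence $a_{i_1},\ldots,a_{i_\ell}$. The three validity checks for a $(p^2+1)$-perturbation go through essentially as in that proposition: monotone submodularity is immediate; $\tilde f \leq (p^2+1) f$ because $\sum_{j:\, a_{i_j} \in S} \delta_{i_j} \leq f(S \cap A) \leq f(S)$ by submodularity and monotonicity; and each marginal bump equals $p^2 \delta_{i_j} \leq p^2 f(\{a_{i_j}\})$. Since every perturbed element lies in $A \setminus B$, $\tilde f(B) = f(B)$, while $\tilde f(A) = f(A) + p^2 \sum_j \delta_{i_j} \geq f(A) + f_A(B \setminus A) = f(A \cup B) \geq f(B)$, contradicting the uniqueness of $B$ under $(p^2+1)$-perturbations. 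For additive $f$, the same exchange argument gives $w(B \setminus A) \leq p^2 w(A \setminus B)$, and defining $\tilde w = p^2 w$ on $A \setminus B$ and $\tilde w = w$ elsewhere yields a valid $p^2$-perturbation with $\tilde w(A) \geq \tilde w(B)$, completing the additive case.
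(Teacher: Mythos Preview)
Your proof is correct and follows essentially the same approach as the paper: restrict attention to $A\setminus B$ versus $B\setminus A$, build the $p$-extendibility blocks only for the elements $a_{i_j}\in A\setminus B$ (so that each $Z_j\subseteq B\setminus A$), invoke the local-optimality swap bound from Theorem~\ref{th:LS-approx} to get $f_A(B\setminus A)\le p^2\sum_j\delta_{i_j}$, and then perturb only along $A\setminus B$ via Proposition~\ref{prop:perturb}. One small imprecision worth fixing: the set you remove from $A$ to accommodate $b\in Z_j$ is, by $p$-extendibility, a subset of $A\setminus\{a_1,\ldots,a_{i_j-1}\}$ (all later elements in the greedy order, possibly including some in $A\cap B$), not just $\{a_{i_j},\ldots,a_{i_\ell}\}$; the bound $f_A(b)\le p\,\delta_{i_j}$ is unaffected since every such element has marginal at most $\delta_{i_j}$ by the greedy ordering.
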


\begin{proof}
The basic idea is that we can contract the elements that belong to $A\cap B$ and then use the same charging argument from above. Using the notation from the proof of \hyperref[th:LS-approx]{Theorem \ref{th:LS-approx}}, for elements $a_i\in A\cap B$ the corresponding $B_i$ block is just $\{a_i\}$. Now we can rename elements in $A\sm B=\{a_1,\dots,a_m\}$ with corresponding blocks $B_1,\dots,B_m$ such that $B\sm A = B_1 \cup \ldots \cup B_m$ and $|B_i| \leq p$. Rewriting the local search guarantee:
$$f_A(B\sm A) \leq \sum_{i=1}^{m} \sum_{b \in B_i} f_A(b) \leq \sum_{i=1}^{m} |B_i| p f_{\{a_1,\ldots,a_{i-1}\}}(a_i)
\leq p^2 \sum_{i=1}^{m} f_{\{a_1,\ldots,a_{i-1}\}}(a_i) \leq p^2 f(A\sm B)$$
Since $f_A(B\sm A)=f(B\cup A)-f(A)\ge f(B)-f(A)$, we can $(p^2+1)$-perturb the input (only the marginal of elements in $A\sm B$) and get: $\tilde{f}(B)=f(B) \le f(A)+p^2f(A\sm B)=\tilde{f}(A)$, hence contradicting the $(p^2+1)$-stability.
In the case of additive $f$, $f_A(B\sm A)= f(B\sm A)$ and $\tilde{f}(B)=f(B)=f(B\sm A)+f(B\cap A)\le p^2f(A\sm B) +f(B\cap A)\le f(A)+(p^2-1)f(A\sm B)=\tilde{f}(A)$, where we $p^2$-perturbed the instance, hence contradicting the $p^2$-stability of the instance.
\end{proof}

%For a constraint $(X,\I)$, we wish to find the $max\{f(S): S\in \I\}$ ($f$: monotone submodular) where $\I$ is the intersection of $p$ matroids: $\I=\cap_{i=1}^p \I_i$. We prove that local search with ($p,1$)-swaps exactly recovers the optimal solution if it is $(p+1)$-stable. For one matroid $p=1$ (a matroid is 1-extendible), our previous \hyperref[th:LS-recovery]{Theorem~\ref{th:LS-recovery}} implies:

\subsection{Recovery for the intersection of Matroids}
If the independence system $\I$ is the intersection of $p$ matroids: $\I=\cap_{i=1}^p \I_i$, local search with ($p,1$)-swaps recovers $(p+1)$-stable optimal solutions (for proof, see \hyperref[app:LSrecovery]{Appendix~\ref{app:LSrecovery}}). 

\begin{theorem}\label{th:LSmatroids}
Given $(X,\I)$, with $\I=\cap_{i=1}^p \I_i$ where each $\I_i$ is a matroid and $f$ monotone submodular, such that the optimal solution is $(p+1)$-stable, Local Search exactly recovers it.
\end{theorem}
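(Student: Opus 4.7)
The plan is to mirror the structure of Theorem~\ref{th:LS-recovery}'s proof but to exploit the stronger exchange property of matroid intersections to replace the factor of $p^2$ by $p$, which is exactly what reduces the stability threshold from $p^2+1$ to $p+1$. Assume for contradiction that $A$ is a $(p,1)$-local optimum with $A\neq B = S^*$.

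First I would produce, for each $b \in B\setminus A$, a set $Z_b \subseteq A\setminus B$ with $|Z_b|\le p$ such that $(A\setminus Z_b)+b \in \I$ and such that each $a \in A\setminus B$ lies in at most $p$ of the $Z_b$'s (one per matroid). The construction uses Brualdi's matroid exchange: for each $\I_i$, after extending $A$ to a base $A_i$ of $\I_i$ if necessary to balance cardinalities, there is an injection $\pi_i: B\setminus A \to A\setminus B$ with $(A - \pi_i(b))+b \in \I_i$. Setting $Z_b = \{\pi_i(b): i \in [p]\}$ gives a feasible set in $\I$ (independence in each $\I_i$ is preserved when additional elements are removed from the base) with $|Z_b|\le p$ and the desired multiplicity bound.

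Each $(A\setminus Z_b)+b$ is a valid $(p,1)$-swap, so local optimality of $A$ yields $f(A) \ge f(A\setminus Z_b + b) = f(A\setminus Z_b) + f_{A\setminus Z_b}(b)$, and submodularity gives $f_A(b) \le f_{A\setminus Z_b}(b) \le f(A) - f(A\setminus Z_b)$. The key algebraic step is to aggregate these bounds. Define $g: 2^{A\setminus B} \to \RR$ by $g(S) = f(A) - f(A\setminus S)$; submodularity of $f$ makes $g$ monotone supermodular with $g(\emptyset)=0$. A standard uncrossing argument---replacing any crossing pair $(Z_b, Z_{b'})$ by $(Z_b\cup Z_{b'},\,Z_b\cap Z_{b'})$ preserves the multiplicity of every element and (weakly) increases $\sum g(Z_b)$, so after finitely many steps we arrive at a laminar family in which each element sits on a chain of length at most $p$---then gives
\[
\sum_{b\in B\setminus A}\bigl(f(A)-f(A\setminus Z_b)\bigr) \;\le\; p\bigl(f(A)-f(A\cap B)\bigr),
\]
proved inductively by peeling off a maximal set at a time and using superadditivity of $g$ on disjoint pieces.

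Combined with $\sum_{b\in B\setminus A} f_A(b) \ge f(A\cup B)-f(A) \ge f(B)-f(A)$ and the submodular inequality $f(A\setminus B)+f(A\cap B) \ge f(A)+f(\emptyset)$, the display yields $f(B)-f(A) \le p\cdot f(A\setminus B)$. Finally, applying Proposition~\ref{prop:perturb} to an arbitrary ordering of $A\setminus B$ produces a valid $(p+1)$-perturbation $\tilde f$ with $\tilde f(A) = f(A) + p\cdot f(A\setminus B)$ and $\tilde f(B) = f(B)$, giving $\tilde f(A) \ge \tilde f(B)$, which contradicts the $(p+1)$-stability of $B$. The main obstacle I anticipate is the supermodular overlap lemma underlying the display: one must check carefully that uncrossing preserves per-element multiplicities and that the chain decomposition of the resulting laminar family extracts exactly the factor $p$. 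A secondary, routine technicality is handling $|A|\neq |B|$ in the exchange step, resolved by extending to bases of the individual $\I_i$.
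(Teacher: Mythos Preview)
Your proposal is correct and reaches the same endpoint as the paper, but the aggregation step is organized differently.

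Both arguments start identically: use matroid exchange to attach to each $b\in B\setminus A$ a set $Z_b\subseteq A\setminus B$ with $|Z_b|\le p$, $(A\setminus Z_b)+b\in\I$, and each $a\in A\setminus B$ lying in at most $p$ of the $Z_b$; then local optimality plus submodularity give $f_A(b)\le f(A)-f(A\setminus Z_b)$. The paper aggregates by fixing an ordering of $A$, letting $\delta(a)$ be the marginal of $a$ in that ordering, and using the ``column'' structure of the matchings: for each matroid $j$ the images $X^j=\{\pi_j(b):b\}$ are distinct, so $\sum_b\bigl(f(A)-f(A\setminus Z_b)\bigr)\le\sum_b\sum_j\delta(\pi_j(b))=\sum_j\sum_{a\in X^j}\delta(a)\le\sum_j f(X^j)\le p\,f(A\setminus B)$. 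You instead forget the per-matroid structure, retain only the multiplicity bound, and extract the factor $p$ via supermodular uncrossing and a depth-$\le p$ layering of the resulting laminar family. Your route is heavier but more robust (it would apply to any family $\{Z_b\}$ with the multiplicity property, not just one arising from $p$ matroid matchings); the paper's route is shorter once one sees the right definition of $\delta$.

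One small caution on your exchange step: ``extend $A$ to a base $A_i$ of $\I_i$ and apply Brualdi'' does not by itself guarantee $\pi_i(b)\in A\setminus B$, since the bijection may land in $A_i\setminus A$. The clean way (which is what the paper invokes) is to stay with independent sets: for each $b$ with $A+b\notin\I_i$ the unique circuit $C_i(A,b)$ must meet $A\setminus B$ (otherwise $C_i(A,b)\subseteq B$, contradicting $B\in\I_i$), and a Hall-type argument on the exchange bipartite graph yields the injection into $A\setminus B$. This is the routine technicality you flagged, but the fix is circuits, not base-extension.
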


\subsection*{Acknowledgements}\label{ack}
TR was supported by NSF award CCF-1524062. We also thank the anonymous reviewers for their useful comments.

\bibliography{main}

\appendix

\section{Proof of \hyperref[th:submod]{Theorem} for welfare maximization}\label{app:submod}

\begin{theorem}
Let $(X,\I)$ be a matroid on the elements of $X$, let $B_1,B_2,\dots,B_k$ be a partition of $X$, $f_i: 2^{B_i}\to \RR^+\cup\{0\}$, for $i\in \{1,2,\dots,k\}$ be monotone submodular and let $f=\sum_{i=1}^kf_i$. If the optimal solution $S^*$ of $\max\{f(S): S\in \I\}$ is $2$-stable with respect only to individual perturbations of the functions $f_i$, greedy will recover $S^*$.
\end{theorem}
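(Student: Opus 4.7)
The plan is to show that the $2$-perturbation constructed in the proof of the $p$-extendible submodular greedy-recovery theorem (Theorem \ref{th:submod}), specialized to $p=1$ since matroids are $1$-extendible, naturally splits across the partition $B_1,\ldots,B_k$, and that each component is a valid $2$-perturbation of the corresponding $f_j$ in the sense of Definition~\ref{d:sm}. Because the stability hypothesis here requires only that $S^*$ remain optimal under such individual $2$-perturbations, the inequality $\tilde{f}(S)\ge\tilde{f}(S^*)$ already produced in that proof yields the desired contradiction.

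Suppose for contradiction that Greedy's output $S\neq S^*$. List $S=\{e_1,\ldots,e_r\}$ in greedy order, and set $S_i=\{e_1,\ldots,e_i\}$ and $\delta_i=f(S_i)-f(S_{i-1})$. The key preliminary observation is that because $B_1,\ldots,B_k$ partitions $X$ and each $f_j$ is supported on its block, whenever $e_i\in B_j$ we have $\delta_i=(f_j)_{S_{i-1}\cap B_j}(e_i)$: the whole greedy marginal is intrinsic to the single block containing $e_i$. For each $j$, define
\[
\tilde{f}_j(T) \;:=\; f_j(T) \;+\; \sum_{i \,:\, e_i \in T \cap B_j \cap (S \sm S^*)} \delta_i.
\]
Then $\sum_j\tilde{f}_j$ equals, pointwise, the perturbation $\tilde{f}$ constructed in the proof of Theorem \ref{th:submod} with $p=1$.

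The heart of the argument is to verify that each $\tilde{f}_j$ is a valid $2$-perturbation of $f_j$. Monotonicity and submodularity are immediate, since $\tilde{f}_j$ is the sum of the monotone submodular $f_j$ and a nonnegative additive function. For the pointwise bound $\tilde{f}_j\le 2f_j$, the claim reduces to
\[
\sum_{i \,:\, e_i \in T \cap B_j \cap (S \sm S^*)} \delta_i \;\le\; f_j\bigl(T \cap B_j \cap (S \sm S^*)\bigr),
\]
which I would establish by a telescoping-submodularity argument: ordering the relevant indices $i_1<\cdots<i_m$ by greedy time and applying submodularity of $f_j$ to bound each $\delta_{i_t}=(f_j)_{S_{i_t-1}\cap B_j}(e_{i_t})$ by $(f_j)_{\{e_{i_1},\ldots,e_{i_{t-1}}\}}(e_{i_t})$, which telescopes to the right-hand side; monotonicity of $f_j$ then gives $\le f_j(T)$. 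For property~3 of Definition~\ref{d:sm}, each marginal increment equals $\delta_i=(f_j)_{S_{i-1}\cap B_j}(e_i)\le f_j(\{e_i\})$, again by submodularity.

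With each $\tilde{f}_j$ a legal $2$-perturbation of $f_j$, the inequality chain from the proof of Theorem \ref{th:submod} (with $p=1$) immediately yields $\tilde{f}(S)=\sum_j\tilde{f}_j(S)\ge\sum_j\tilde{f}_j(S^*)=\tilde{f}(S^*)$, contradicting the assumption that $S^*$ is the unique optimum under every individual $2$-perturbation of the $f_j$'s. The main obstacle is the pointwise bound $\tilde{f}_j\le 2f_j$: Proposition~\ref{prop:perturb} does not deliver it directly here, because our $\delta_i$'s are \emph{global} greedy marginals (taken with respect to $S_{i-1}$, not the subsequence of $B_j$-elements seen so far) and are strictly smaller in general; the telescoping-submodularity observation above is precisely what reconciles the two viewpoints.
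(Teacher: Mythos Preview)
Your proposal is correct and follows essentially the same approach as the paper: define, for each block $B_j$, the perturbation $\tilde f_j(T)=f_j(T)+\sum_{e\in T\cap(S\sm S^*)}\delta_e$ (where $\delta_e$ is the greedy marginal when $e$ was added, which by your key observation equals the $f_j$-marginal relative to the previously chosen $B_j$-elements), and then show $\tilde f(S)\ge\tilde f(S^*)$ via the $p=1$ exchange argument. The paper presents the inequality chain directly using the matroid bijection $e\leftrightarrow e'$ between $S\sm S^*$ and $S^*\sm S$ rather than invoking Theorem~\ref{th:submod}, but this is the same mechanism; conversely, the paper does not spell out why each $\tilde f_j$ is a valid $2$-perturbation, whereas your telescoping-submodularity check of $\tilde f_j\le 2f_j$ and of property~3 fills that in explicitly.
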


\begin{proof}
We note that the $B_i$'s form a partition of $X$ which is not tied to the matroid in any way. To avoid confusion, we should first emphasize the greedy algorithm in this case: It starts with the empty set $S_0=\emptyset$, at step $t$ it selects: $e=\arg\max_{x\in X}\{f(S_{t-1}+x)-f(S_{t-1})\}$ subject to the matroid constraint and it updates $S_t\leftarrow S_{t-1}+e$. This is a particular instantiation of the standard greedy algorithm in welfare maximization that first picks an item giving it to the player so that it yields maximum marginal improvement.

Suppose greedy outputs $S\neq S^*$ and that it chose elements $A_i\subseteq B_i$. Let $S_e$ be the greedy solution right before adding element $e$. Then a 2-perturbation of the individual functions is:
\[
\tilde{f_i}(A_i)=f_i(A_i)+\sum_{e\in (S\cap B_i)\sm S^*}f_i((B_i\cap S_e)+e)-f_i(B_i\cap S_e)
\]
Now coming back to the total welfare function $f$ we get:
\[
\tilde{f}(S)=\sum_{i=1}^k\tilde{f}_i(S\cap B_i)=f(S)+\sum_{e\in S\sm S^*}f_{S_e}(e)\ge f(S)+\sum_{e'\in S^*\sm S, e\leftrightarrow e'}f_{S_e}(e')\ge
\]
\[
\ge f(S)+f_S(S^*\sm S)\ge f(S^*\cup S)\ge f(S^*)=\tilde{f}(S^*)
\]
where we made use of the greedy criterion, submodularity and the matroid matching $e\leftrightarrow e'$ between elements $e\in S\sm S^*$ and $e' \in S^*\sm S$. We got $\tilde{f}(S)\ge \tilde{f}(S^*)$, hence a contradiction to the 2-stability of $S^*$ and hence $S\equiv S^*$ and greedy exactly recovers the optimal solution. 
\end{proof}

%%% Local Variables:
%%% mode: latex
%%% TeX-master: "main"
%%% End:

\section{Hereditary Systems}\label{sec:hereditary}

Motivated by the ``bad'' example (see \hyperref[knapsack]{Proposition \ref{knapsack}}) for the greedy algorithm, we define a new notion of an independence system that we call \textit{hereditary} $p$-system or $p$-\textit{hereditary} that as we see later is a different characterization of $p$-extendible systems. In the aforementioned example, even though we started with a $p$-system, as we progressed picking elements with the greedy algorithm, the system became a $p'$-system with $p' \gg p$, thus leading to bad performance for the greedy, even though we had the optimal solution being stable by a large amount.

The intuition behind the following definition is that we want our system to remain a $p$-system under deletions and contractions of elements.

\begin{definition}[Hereditary $p$-system]
A $p$-system $(X,\I)$ is said to be \textit{hereditary} if:
\begin{enumerate}
\item For each set $Y\sse X$, the system $(X', \I | X')$\footnote{By $(X',\I | X')$, we mean the \textit{restriction} of $\I$ to the set of elements $X'$, which is the independence system on the set $X'$, whose independent sets are the independent sets of the initial set $\I$ that are contained in $X'$.}, where $X'=X \sm Y$, is a $p$-system. This corresponds to the {\bf deletion} of the elements in $Y$ from the system.
\item For each set $Y\sse X$,  the system $(X\sm Y, \I / Y)$\footnote{By $(X\sm Y,\I / Y)$, we mean the \textit{contraction} of $\I$ by $Y$, which is the independence system on the underlying set $X\sm Y$, whose independent sets are the sets $Z \sse X\sm Y$, such that $Z\cup Y \in \I$.} is a $p$-system. This corresponds to the {\bf contraction} of the elements in $Y$.
\end{enumerate}
\end{definition}

\noindent Looking back at our ``bad'' Knapsack example we see that it is not a hereditary system since initially $p=\tfrac{2M}{M+1}\le 2$, but after we had picked all the elements in set $A$, the system on the remaining elements became an $M$-extendible system. We now prove that the family of hereditary $p$-systems coincides with the family of $p$-extendible systems.  

\begin{proposition}
A $p$-system is $p$-hereditary if and only if it is $p$-extendible.
\end{proposition}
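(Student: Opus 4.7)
I would prove both directions separately.

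For the forward direction ($p$-extendible $\Rightarrow$ $p$-hereditary), the plan is to show that deletions and contractions of a $p$-extendible system are again $p$-extendible, which by the inclusion chain in the preliminaries implies they are $p$-systems. For a deletion $(X\sm Y, \I|_{X\sm Y})$, the witness $Z$ given by $p$-extendibility in $\I$ automatically avoids $Y$ (since $A,B,e$ do). For a contraction $(X \sm Y, \I/Y)$, the strategy is to lift any configuration $A \sse B$, $A+e \in \I/Y$ to $A \cup Y \sse B \cup Y$, $(A \cup Y)+e \in \I$, apply $p$-extendibility in $\I$ to get $Z \sse (B \cup Y) \sm (A \cup Y) = B \sm A$ with $|Z|\le p$, and descend back to $(B\sm Z)+e \in \I/Y$.

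For the reverse direction ($p$-hereditary $\Rightarrow$ $p$-extendible), which is the main content, I would argue by contradiction and minimality. Suppose $\I$ is $p$-hereditary but not $p$-extendible, and pick a counterexample $(A, B, e)$ --- i.e.\ $A \sse B$ in $\I$, $A + e \in \I$, and $B \sm Z + e \notin \I$ for every $Z \sse B \sm A$ with $|Z| \le p$ --- minimizing $k := |B \sm A|$. Taking $Z = \emptyset$ forces $B + e \notin \I$, and taking $Z = B\sm A$ forces $k \ge p+1$. The key observation I would prove is that for every $f \in B \sm A$, $A + e + f \notin \I$: otherwise, enlarging $A$ to $A \cup \{f\}$ would produce a strictly smaller counterexample, since $A\cup\{f\} \sse B$, $(A\cup\{f\})+e \in \I$, and the non-extendibility condition is only restricted further as $B \sm (A\cup\{f\}) \subsetneq B\sm A$.

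With this observation in hand, I would conclude by invoking $p$-hereditary on the contraction $\I/A$, which is therefore a $p$-system on ground set $X\sm A$. Applying its defining condition to the subset $Y := (B \sm A) \cup \{e\}$: the set $B \sm A$ is a base of $Y$ of size $k$ (it lies in $\I/A$ since $B \in \I$, and is maximal in $Y$ since $(B\sm A)+e \in \I/A$ would imply $B+e \in \I$), while $\{e\}$ is a base of $Y$ of size $1$ (it lies in $\I/A$ since $A+e\in\I$, and is maximal in $Y$ by the observation above, because $\{e,f\} \in \I/A$ would force $A+e+f \in \I$). The resulting max-to-min base ratio is $k \ge p+1 > p$, contradicting that $\I/A$ is a $p$-system. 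The main obstacle is the minimality observation --- selecting the right notion to minimize and verifying that the failure condition is inherited --- while the rest is a one-shot application of the $p$-system definition to a well-chosen minor.
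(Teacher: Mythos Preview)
Your proof is correct and follows essentially the same route as the paper. Where you pick a counterexample minimizing $k=|B\sm A|$ and contract by $A$, the paper instead takes any counterexample, lets $Z_0\sse B\sm A$ be the smallest set with $(B\sm Z_0)+e\in\I$, and passes to the minor on $Z_0\cup\{e\}$ obtained by contracting $B\sm Z_0$ --- which is equivalent to replacing $A$ by $B\sm Z_0$ and lands exactly on your minimal-$k$ configuration (the minimality of $Z_0$ is your ``key observation'').
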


\begin{proof}
$p$-hereditary $\implies$ $p$-extendible: Let's first think of $p$ as an integer; as we will see afterwards only this case (with integer $p$) is interesting. Suppose we had a $p$-hereditary system that was not $p$-extendible. By negating the definition of $p$-extendibility (see \hyperref[sec:preliminaries]{Preliminaries}), it follows that there exist sets $A,B \sse X$ with $A \sse B$, $A,B \in \I$ and $A\cup\{e\} \in \I$ such that for all sets $Z\sse B\sm A$ with $|Z| \le p$: $(B\sm Z)\cup \{e\} \not\in \I$. Define $Z_0 \sse B\sm A$ to be the smallest set that we need to remove from $B$ in order to have: $(B\sm Z_0)\cup \{e\} \in \I$. We know that $|Z_0|>p$ and thus, by the hereditary property, if we project the independence system on the elements $Z_0\cup \{e\}$, we get $Z_0\cup \{e\} \not\in \I$ with the ratio $\tfrac{|Z_0|}{|\{e\}|}=\tfrac{|Z_0|}{1}>p$, which contradicts the fact that we started with a $p$-hereditary system.

For $p$-extendible $\implies$ $p$-hereditary: This direction follows easily just by the definition of $p$-extendibility. To handle non-integer values of $p$, we observe that by the first argument above, a $p$-hereditary system is actually $\lfloor p\rfloor$-extendible and thus, it is $\lfloor p\rfloor$-hereditary (e.g. a 2.9-hereditary system is 2-extendible).
\end{proof}

\section{Proof of \hyperref[th:LSmatroids]{Theorem} for Intersection of Matroids and recovery}\label{app:LSrecovery}

\begin{theorem}
Given $(X,\I)$, with $\I=\cap_{i=1}^p \I_i$ where each $\I_i$ is a matroid and $f$ monotone submodular, such that the optimal solution is $(p+1)$-stable, Local Search exactly recovers it.
\end{theorem}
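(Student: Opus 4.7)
The plan is to assume for contradiction that local search terminates at a set $A \neq B$, where $B = S^*$ is the global optimum, and to exhibit a $(p+1)$-perturbation $\tilde{f}$ of $f$ with $\tilde{f}(A) \geq \tilde{f}(B)$, contradicting $(p+1)$-stability. The argument mirrors the structure of Theorem~\ref{th:LS-recovery} for $p$-extendible systems, but exploits matroid structure to save a factor of $p$ in the required stability. The perturbation will be the one furnished by Proposition~\ref{prop:perturb} applied to a fixed enumeration $A \setminus B = \{a_1, \ldots, a_m\}$ with $\gamma = p+1$, so $\tilde{f}(T) = f(T) + p \sum_{j : a_j \in T} \delta_j$ where $\delta_j = f(\{a_1, \ldots, a_j\}) - f(\{a_1, \ldots, a_{j-1}\})$. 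For this perturbation $\tilde{f}(A) = f(A) + p \cdot f(A \setminus B)$ and $\tilde{f}(B) = f(B)$, so it suffices to establish $f_A(B \setminus A) \leq p \cdot f(A \setminus B)$; then $\tilde{f}(A) \geq f(A) + f_A(B \setminus A) = f(A \cup B) \geq f(B) = \tilde{f}(B)$.

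The crux is producing, for every $b \in B \setminus A$, a ``swap witness'' $Z(b) \subseteq A \setminus B$ with $|Z(b)| \leq p$, $(A \setminus Z(b)) + b \in \I$, and such that every $a \in A \setminus B$ lies in at most $p$ of the $Z(b)$'s. I construct these matroid-by-matroid. In $\I_i$, first augment $A$ inside $A \cup B$ to some $A^{(i)} \in \I_i$ with $|A^{(i)}| = |B|$; this is always possible because if no $b \in B \setminus A$ satisfies $A + b \in \I_i$, then $B$ lies in the span of $A$ in $\I_i$, forcing $|B| \leq |A|$. Brualdi's bijection on the equal-size independent sets $A^{(i)}$ and $B$ (fixing their intersection) then restricts to an injection $\pi_i : T_i \to A \setminus B$ on $T_i := \{b \in B \setminus A : A + b \notin \I_i\}$, satisfying $(A^{(i)} \setminus \pi_i(b)) + b \in \I_i$ and hence, by downward closure, $(A \setminus \pi_i(b)) + b \in \I_i$. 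The image lies in $A \setminus B$ because $A^{(i)} \setminus A \subseteq B$, and $T_i \subseteq B \setminus A^{(i)}$ because every element added during augmentation satisfies $A + b \in \I_i$ and so lies outside $T_i$. Setting $Z(b) = \{\pi_i(b) : b \in T_i\}$ gives $|Z(b)| \leq p$; every $a \in A \setminus B$ is charged at most $p$ times (once per matroid, by injectivity of each $\pi_i$); and $(A \setminus Z(b)) + b \in \I_j$ for every matroid $\I_j$ (using $(A \setminus \pi_j(b)) + b \in \I_j$ when $b \in T_j$, and $A + b \in \I_j$ otherwise, together with downward closure).

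With the witnesses in hand, local optimality and submodularity combine through a telescoping argument. Local optimality of $A$ under $(p,1)$-swaps gives $f(A) \geq f((A \setminus Z(b)) + b) \geq f(A \setminus Z(b)) + f_A(b)$, where the second inequality uses $f_A(b) \leq f_{A \setminus Z(b)}(b)$ from submodularity; hence $f_A(b) \leq f(A) - f(A \setminus Z(b))$. Writing $Z(b) = \{a_{i_1}, \ldots, a_{i_k}\}$ with $i_1 < \cdots < i_k$ and adding the $a_{i_\ell}$'s back to $A \setminus Z(b)$ in this order, the $\ell$-th marginal is added to a set that already contains $\{a_1, \ldots, a_{i_\ell - 1}\}$ (since every $a_j$ with $j < i_\ell$ is either untouched or already restored), so submodularity bounds that marginal by $\delta_{i_\ell}$; summing gives $f_A(b) \leq \sum_{j : a_j \in Z(b)} \delta_j$. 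Summing over $b \in B \setminus A$ and exchanging the order of summation, the ``at most $p$'' charging bound yields
\[
f_A(B \setminus A) \;\leq\; \sum_{b \in B \setminus A} f_A(b) \;\leq\; \sum_j |\{b : a_j \in Z(b)\}| \cdot \delta_j \;\leq\; p \sum_j \delta_j \;=\; p \cdot f(A \setminus B),
\]
which is exactly the inequality needed to close the argument. The main obstacle is the matroid-matching step above: coordinating the Brualdi matchings across all $p$ matroids so that the combined witnesses simultaneously certify feasibility of the swap in $\I$ and respect the per-element charging bound.
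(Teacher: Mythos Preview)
Your argument is correct and follows the same route as the paper: use per-matroid exchange (the paper simply invokes the ``matching property'' from~\cite{reichel2007evolutionary}, whereas you build it explicitly via Brualdi) to obtain for each $b\in B\setminus A$ one witness in $A\setminus B$ per matroid, combine these into a feasible $(p,1)$-swap, apply local optimality together with submodularity to reach $f_A(B\setminus A)\le p\,f(A\setminus B)$, and close with the $(p+1)$-perturbation of Proposition~\ref{prop:perturb}. The only difference is bookkeeping---the paper groups the double sum by matroid and bounds each $\sum_i \delta(x_i^j)\le f(X^j)\le f(A\setminus B)$ via monotonicity, while you phrase the same thing as a charging argument (each $a_j$ is hit at most once per matroid, hence at most $p$ times).

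One small technical slip: augmenting $A$ inside $A\cup B$ to $A^{(i)}$ with $|A^{(i)}|=|B|$ tacitly assumes $|A|\le|B|$, which nothing in the setup guarantees (maximal independent sets in an intersection of $p$ matroids can differ in size). The fix is immediate---when $|A|>|B|$ augment $B$ by elements of $A\setminus B$ instead, or simply extend both $A$ and $B$ to bases of $\I_i$ restricted to $A\cup B$ before invoking Brualdi; your downstream claims ($T_i\subseteq B\setminus A^{(i)}$, image contained in $A\setminus B$, injectivity of each $\pi_i$) go through unchanged.
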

\begin{proof}
We denote with $A$ our local search solution (let it be maximal, even if new elements add zero value to it) and with $B$ the global optimum. Let $Y=B\sm A=\{y_1,y_2,\dots,y_k\}$ be the elements of the optimum that local search didn't choose. By the matching property~\cite{reichel2007evolutionary} of the matroids we get:
\[
\exists\ X^1,X^2,\dots,X^p \subseteq A\sm B, \text{where\ } X^j=\{x_1^j,x_2^j,\dots,x_k^j\}\  \text{such that:\ }
\]
\[
\forall j\in\{1,\dots,p\}: x_i^j \in C_j(A,y_i), \forall i\in \{1,2,\dots,k\},
\]
where $C_j(A,y_i)$ is the circuit (minimally dependent set) created in matroid $\I_j$ when adding $y_i$ in $A$.

Using the local search (with $(p,1)$ swaps) stopping condition, we have: (for ease, we use $+,-$ instead of the more accurate $\cup, \sm$)
\[
f(A+y_i-x_i^1-x_i^2-\dots-x_i^p)\le f(A), \forall i\in \{1,2,\dots,k\}
\]
(Note that in case $f$ is additive the above inequality just becomes: $f(y_i)\le f(x_i^1)+f(x_i^2)+\dots+f(x_i^p)$).
Since $(A-\cup_{j=1}^p x^j_i)\subseteq (A+y_i-\cup_{j=1}^p x^j_i)$, using the submodularity for adding $\cup_{j=1}^p x^j_i$, we get:
\[
f(A+y_i)-f(A+y_i-\cup_{j=1}^p x^j_i)\le f(A)-f(A-\cup_{j=1}^p x^j_i)
\]
and adding $f(A+y_i)-f(A)$ to both sides and using submodularity and the local search stopping condition, we get:
\[
f(A+y_i)-f(A)-f(A)+f(A-\cup_{j=1}^p x^j_i)\le f(A+y_i-\cup_{j=1}^p x^j_i)-f(A)\le 0
\]
We conclude: $f(A+y_i)-f(A)\le f(A)-f(A-\cup_{j=1}^p x^j_i), \forall i\in \{1,2,\dots,k\}$ and adding these inequalities ($\delta(x_i^j)$ is the marginal gain by adding $x_i^j$ at the point of addition):
\[
f_A(B\sm A)\le \sum_{i=1}^k\sum_{j=1}^p\delta(x_i^j) = \sum_{j=1}^p\sum_{i=1}^k\delta(x_i^j)\le \sum_{j=1}^p f(X^j)\le \sum_{j=1}^p f(A\sm B)\le pf(A\sm B)
\]
Now we can $(p+1)$-perturb the marginals for elements of $A\sm B$:
\[
\tilde{f}(B)=f(B)\le f(A\cup B)\le f(A)+f_A(B\sm A)\le f(A)+pf(A\sm B)=\tilde{f}(A)
\]
which contradicts the ($p+1$)-stability of the instance. Once again, for the case of additive $f$: $f_A(B\sm A)=f(B\sm A)$ and thus $p$-stability is enough to guarantee recovery. ($\tilde{f}(B)=f(B)=f(B\sm A)+f(B\cap A)\le pf(A\sm B) +f(B\cap A)\le f(A)+(p-1)f(A\sm B)=\tilde{f}(A)$, where we $p$-perturbed the instance)
\end{proof}

%%% Local Variables:
%%% mode: latex
%%% TeX-master: "main"
%%% End:

\section{Counterexamples}\label{sec:counter}

Here are two simple counterexamples that prove the tightness of our Greedy recovery results and our Local Search approximation and recovery results:

\begin{itemize}
\item In the submodular case, we proved greedy recovers $(p+1)$-stable $p$-extendible systems. Here is a simple example of a matroid (1-extendible) where Greedy and Local Search fail to recover the optimal solution even though it is 2-stable (also notice that here, Greedy and Local Search give a 2-approximation):  Take $A_1=\{x,\epsilon_1\}, B_1=\{y\}, A_2=\{\epsilon_2\}, B_2=\{x\}$ as in~\cite{filmus2012power}. Assign $w(x)=w(y)=1$ and $w(\epsilon_1)=\epsilon, w(\epsilon_2)=\epsilon$ for some small $\epsilon>0$ (and so $w(A_1)=1+\epsilon$) and consider the partition matroid whose independent sets can only contain one of $\{A_i,B_i\}, i=1,2$. Observe that $\{A_1,A_2\}$ is a local optimum with value $1+2\epsilon$, whereas the global optimum is $\{B_1,B_2\}$ with value 2. Also notice that the same solution is produced by the Greedy algorithm and that the instance can be $(2-\epsilon')$-stable for any small $\epsilon'>0$.

\item Local Search is a $p^2$-approximation for $p$-extendible systems. Look at \hyperref[counter1]{Figure~\ref{counter1}} for a tight counterexample (just for simplicity, we have the $p=2$ case; it generalizes readily).
\end{itemize}

\begin{figure}[h!]
	\centering
	\includegraphics[width=5cm,height=4cm]{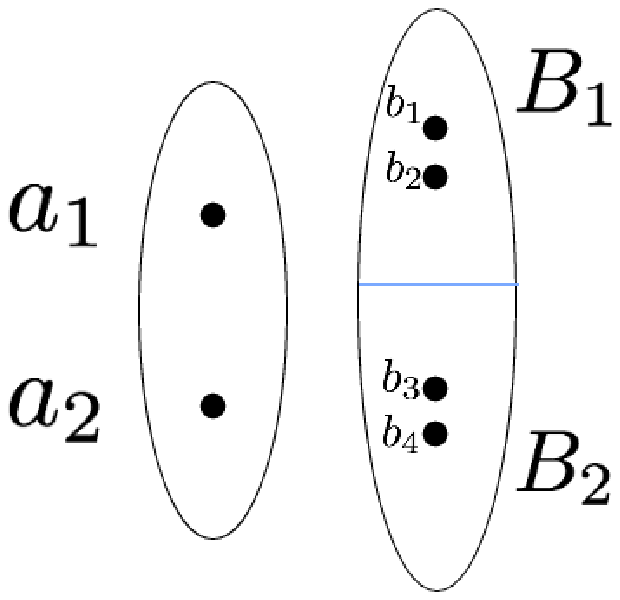}
        	\caption{Local Search is a 4-approximation for this 2-extendible system $(X,\I)$: Let $A=\{a_1,a_2\}$ be feasible and assign $w(a_1)=w(a_2)=1+\epsilon$ and $w(b_i)=2, \forall i\in\{1,2,3,4\}$. The constraints are: $a_1\cup B_1 \notin \I, a_2\cup B_2 \notin \I $, $a_i\cup B_j \in \I$ for $i\neq j$ and $A\cup b_i\notin \I,\forall i\in\{1,2,3,4\}$. Observe that $A$ is a local optimum ($(2,1)$-swaps) with value $2+2\epsilon$, whereas $B_1\cup B_2$ is the global optimum with value 8. Notice also that for the appropriate choice of $\epsilon$, this can be a $(4-\epsilon')$-stable instance for any small $\epsilon'$.}
	\label{counter1}
\end{figure}

\subsection{Cardinality Constraints}\label{sec:cardinality-counter}

Another interesting separation between approximation and stability happens for the case of cardinality constraints. A special case of submodular maximization on $p$-extendible systems is when we have a uniform matroid constraint where the only feasible solutions are those that have cardinality $k \ge 1$ ($\I=\{S\subseteq X: |S|\le k\}$). For this special case, recall that greedy is a $(1-\tfrac{1}{e})$-approximation (in fact, $1-(1-\tfrac{1}{k})^k$) and that this is tight \cite{feige1998threshold}. Regarding stability, we show that the stability threshold needed by greedy for recovery is at least $2-\tfrac{1}{k}$ and so $(1-\tfrac{1}{e})^{-1}$-stability is not enough, i.e. here the approximation threshold is strictly smaller than the stability threshold needed for recovery (see also \hyperref[fig:cardinality]{Figure~\ref{fig:cardinality}}).

\begin{proposition}
For submodular maximization under a uniform matroid ($\I=\{S: |S|\le k\},k\ge 1$), greedy cannot recover $\gamma$-stable instances if $\gamma<(2-\tfrac{1}{k})$.
\end{proposition}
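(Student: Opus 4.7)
The plan is to exhibit, for every $\gamma < 2 - \tfrac{1}{k}$, a single weighted-coverage instance on which the greedy algorithm fails to recover the optimum yet the optimum is $\gamma$-stable in the sense of Definition~\ref{d:sm2}. Weighted coverage is automatically monotone submodular, so the bulk of the work is verifying $\gamma$-stability.

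I would take the ground set $X = \{a, b_1, \ldots, b_k\}$ with the uniform matroid $\I = \{S \sse X : |S| \le k\}$. Fixing a small $\epsilon > 0$ and a universe $\{u_1, \ldots, u_k\}$, let $a$ cover every $u_j$ with weight $(1+\epsilon)/k$ and let each $b_i$ cover only $u_i$ with weight $1$. The induced coverage function $f(S) = \sum_{j=1}^{k} \max_{e \in S} w(e, u_j)$ then satisfies $f(\{a\}) = 1 + \epsilon$, $f(\{b_i\}) = 1$, $f(B) = k$ for $B := \{b_1, \ldots, b_k\}$, and $f(\{a\} \cup (B \sm \{b_i\})) = (k-1) + (1+\epsilon)/k$. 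In particular, $B$ is the unique optimum. Since $f(\{a\}) > f(\{b_i\})$, greedy picks $a$ first, and by submodularity the remaining $b_j$'s each have marginal $1 - (1+\epsilon)/k$, so greedy outputs some $S^{(i)} := \{a\} \cup (B \sm \{b_i\})$ of value $k - 1 + (1+\epsilon)/k < k = f(B)$.

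It remains to verify $\gamma$-stability. By symmetry, the only feasible competitors to $B$ are the $k$ sets $S^{(i)}$. Fix any $i$, set $T := B \sm \{b_i\}$, and expand an arbitrary valid $\gamma$-perturbation $\tilde f$ along the common base $T$:
\[\tilde f(S^{(i)}) - \tilde f(B) \;=\; \tilde f_T(a) - \tilde f_T(b_i).\]
Applying Property~3 of Definition~\ref{d:sm} (upper bound on $\tilde f_T(a)$, lower bound on $\tilde f_T(b_i)$) yields the single-shot bound
\[\tilde f(S^{(i)}) - \tilde f(B) \;\le\; f_T(a) + (\gamma - 1) f(\{a\}) - f_T(b_i) \;=\; \tfrac{1+\epsilon}{k} + (\gamma - 1)(1+\epsilon) - 1.\]
This is strictly negative iff $\gamma < 1 + \tfrac{1}{1+\epsilon} - \tfrac{1}{k}$, and the right-hand side tends to $2 - \tfrac{1}{k}$ as $\epsilon \to 0^+$. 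So given $\gamma < 2 - \tfrac{1}{k}$, I would pick $\epsilon$ small enough to make the inequality strict; then the instance is $\gamma$-stable while greedy still fails on it, proving the proposition.

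The main obstacle is ruling out \emph{all} valid $\gamma$-perturbations at once, not only the natural one that additively boosts $f(\{a\})$ by $(\gamma - 1) f(\{a\})$. The single-shot comparison above handles this uniformly by reducing each competitor to a pair of marginals over a common set $T$ and then invoking only the pointwise marginal bounds of Definition~\ref{d:sm}. The calculation is tight, matched by the obvious attacker perturbation, which is what makes the threshold $2 - \tfrac{1}{k}$ come out exactly.
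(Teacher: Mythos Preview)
Your construction is, up to a global scaling by $k$, identical to the paper's (their $e,x_1,\ldots,x_k$ with $f(O)=1$ is your $a,b_1,\ldots,b_k$ with $f(B)=k$), and the threshold computation $(\gamma-1)f(\{a\}) < f_T(b_i)-f_T(a)$ is the same inequality the paper writes. Your stability verification via the marginal bounds of Definition~\ref{d:sm} is in fact more careful than the paper's one-line argument; the only thing to add is that feasible sets other than the $S^{(i)}$ are strictly dominated by either $B$ or some $S^{(i)}$ under every $\gamma$-perturbation because $\tilde f$ is monotone and $\tilde f_S(b_j)\ge f_S(b_j)>0$, so your restriction to these competitors is justified.
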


\begin{proof}
The $(2-\tfrac{1}{k}-\delta)$-stable counterexample (for any small $\delta$) where greedy fails is the following: We have in total ($k+1$) elements: $x_1,x_2,\dots, x_k$ and a special element $e$. Denote $O=\{x_1,x_2,\dots, x_k\}$ and with $O_i$ any subset of $O$ with exactly $i$ elements. The function $f$ has: $f(O)=1, f_{O_i}(x_j)=\tfrac{1}{k}, \forall x_j\in O\setminus O_i,  f_{\{e\}\cup O_i}(x_j)=\tfrac{1}{k}(1-\tfrac{1}{k}), \forall x_j\in O\setminus O_i $ and $f(e)=\tfrac{1}{k}, f_{O_i}(e)=\tfrac{1}{k}-\tfrac{i}{k^2}$. Then Greedy first picks element $e$ (to break ties we could set $f(e)=\tfrac{1}{k}+\epsilon$) and then $k-1$ other elements $O_{k-1}\subseteq O$ (let $S=\{e\}\cup O_{k-1}$). However, the optimal solution is $O$ with $f(O)=1$ and greedy has value $1-(\tfrac{1}{k}-\tfrac{1}{k^2})$. Since $S\sm O=\{e\}$, any perturbation such that $\tilde{f}(S)\ge \tilde{f}(O)$ could only $\gamma$-perturb the value $f(e)$: $\tilde{f}(S)\ge \tilde{f}(O)\iff (\gamma-1)\tfrac{1}{k}\ge\tfrac{1}{k}-\tfrac{1}{k^2} \iff \gamma\ge (2-\tfrac{1}{k})$.
\end{proof}
\begin{figure}[h!]
	\centering
	\includegraphics[width=13cm,height=4cm]{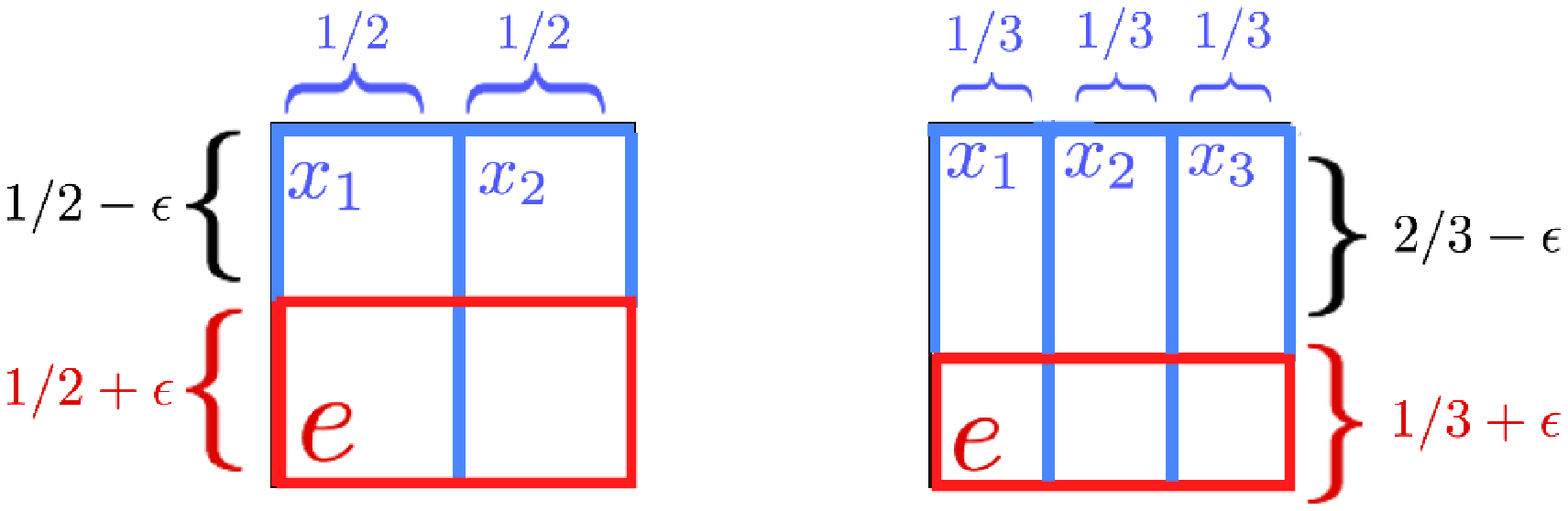}
        	\caption{This is the case for $k=2$ and $k=3$ (the area corresponds to marginal improvements). For $k=2$, there are three elements: $\{e,x_1,x_2\}$. $f(\{x_1,x_2\})=1$, so the optimal solution is $O=\{x_1,x_2\}$. We trick the greedy algorithm which first chooses $\{e\}$ that has slightly better marginal value. For exact recovery, a $\tfrac{3}{2}$-perturbation is needed, even though Greedy is a $\left(\tfrac{4}{3}\right)^{-1}$-approximation. Similarly, for $k=3$, the optimum is $O=\{x_1,x_2,x_3\}$, whereas Greedy picks $\{e,x_1,x_2\}$ and needs $\tfrac{5}{3}$-stability for recovery, even though it is $\left(\tfrac{19}{27}\right)^{-1}$-approximation. Note that stability thresholds need to be larger than the approximation factors.}
	\label{fig:cardinality}
\end{figure}

\end{document}